\newcommand{\be}{\begin{equation}}
\newcommand{\ee}{\end{equation}}
\newcommand{\LL}{\mathfrak{L}_1}
\newcommand{\GG}{\mathfrak{M}}
\newtheorem{theorem}{Theorem}
\newtheorem{remark}{Remark}
\newtheorem{corollary}{Corollary}
\newtheorem{Lemma}{Lemma}
\title{Control landscape of measurement-assisted transition probability for a three-level quantum system with dynamical symmetry}
\author{Maria Elovenkova and Alexander Pechen}
\date{}
\begin{document}

\maketitle

Department of Mathematical Methods for Quantum Technologies, Steklov Mathematical Institute of Russian Academy of Sciences, 8 Gubkina str., Moscow, Russia

\begin{abstract}
Quantum systems with dynamical symmetries have conserved quantities which are preserved under coherent controls. Therefore such systems can not be completely controlled by means of only coherent control. In particular, for such systems maximal transition probability between some pair of states over all coherent controls can be less than one. However, incoherent control can break this dynamical symmetry and increase the maximal attainable transition probability. Simplest example of such situation occurs in a three-level quantum system with dynamical symmetry, for which maximal probability of transition between the ground and the intermediate state by only coherent control is $1/2$, and by coherent control assisted by incoherent control implemented by non-selective measurement of the ground state is about $0.687$, as was previously analytically computed. In this work we study and completely characterize all critical points of the kinematic quantum control landscape for this measurement-assisted transition probability, which is considered as a function of the kinematic control parameters (Euler angles). This used in this work measurement-driven control is different both from quantum feedback and Zeno-type control. We show that all critical points are global maxima, global minima, saddle points and second order traps. For comparison, we study the transition probability between the ground and highest excited state, as well as the case when both these transition probabilities are assisted by incoherent control implemented by measurement of the intermediate state. 
\end{abstract}

Keywords: quantum measurement, measurement-assisted quantum control, quantum control landscape, dynamical symmetry, second-order trap

\section{Introduction}

Quantum control attracts high interest due to fundamental reasons related with ability to manipulate quantum systems and due to various existing and prospective applications in quantum technologies~\cite{KochEPJ2022}. Coherent control is realized by lasers or electromagnetic field~\cite{RiceBook,TannorBook,Shapiro_Brumer_book}. Incoherent control can be realized by various forms of the environment. One particular form of incoherent control is control by {\it back-action of non-selective quantum measurements}, either projective or continuous. The general formulation of assisting coherent control of quantum systems by back-action of non-selective quantum measurements was developed in~\cite{PechenRabitz2006}. Various incoherent control strategies using back-action of quantum measurements were applied to many problems such as enhancing controllability of systems with dynamical symmetry~\cite{Shuang2008}, developing incoherent control schemes for quantum systems with wavefunction-controllable subspaces~\cite{DongIEEE2008}, using measurements to enhance controllability~\cite{Sugny_Kontz_2008}, manipulating a qubit~\cite{Blok2014}, incoherently controlling the retinal isomerization in rhodopsin~\cite{LucasPRL2014}, controlling transitions in the Landau--Zener system~\cite{PechenTrushechkin2015}, modifying decay rates of excited states in open quantum systems~\cite{Zhang_Ai_Li_Xu_Sun_2014}, 
inducing Boolean dynamics for open quantum networks~\cite{QiIEEE2023}, steering quantum states by
sequences of weak blind measurements~\cite{KumarPRA2022}, incoherently controlling a qubit using the quantum Zeno effect~\cite{HacohenPRL2018}, decoherence-assisted optimal quantum state preparation~\cite{CejnarPRA2023}, building quantum feedback-like models in photosynthesis~\cite{KozyrevPRA2022}, controlled quantum state transfer based on the optimal measurement~\cite{Harraz2017}, etc. Induced by energy measurement quantum damping of position for an atom bouncing on a reflecting surface in the presence of a homogeneous gravitational field was investigated~\cite{ViolaPRA1996}. Measurement induced decoherence of a trapped ions prepared in nonclassical motional state was shown to inhibit the internal population dynamics and damp the vibrational motion~\cite{ViolaPRA1997}. Mapping an unknown mixed quantum state onto a known pure state using sequential measurements of two noncommuting observables only and without the use of unitary transformations was studied~\cite{RoaPRA2006}. Zeno type effects in a superconducting qubit in the presence of structured noise baths and variable measurement rates were used to suppress and accelerate the qubit decay~\cite{HarringtonPRL2017}. Quantum measurements were used for controlling weak values in the Cheshire Cat effect in open quantum systems~\cite{DajkaQR2019}. A quantum clock driven by entropy reduction through measurement was proposed~\cite{Gangat2021}. Zeno effect was used for demonstrating a universal control between non-interacting qubits~\cite{BlumenthalNPJ2022} and generating multi-qubit entangling Zeno gate~\cite{Lewalle2022}. Measurement-based estimator scheme for continuous quantum error correction was proposed~\cite{BorahPRR2022}. Steering a two-level system using weak measurements in addition to the system Hamiltonian was shown to allow for targeting any pure or mixed state~\cite{KumarPRA2022}. Stroboscopic driving combined with repeated measurement-like interactions with an external spectator system was applied for optimal quantum state preparation~\cite{CejnarPRA2023}. Measurement-based feedback approach with reinforcement learning exploiting the non-linearity of weak measurements with a coherent driving was proposed to prepare cavity Fock state superpositions~\cite{Perret2023}. Measurement-based control was applied for finding minimum energy eigenstate of a given energy function~\cite{Clausen2023}. Control of two-level systems by using unital positive maps associated with quantum Lotka–Volterra operators was considered ~\cite{MukhamedovSymmetry2022}. Projective quantum measurement induces the collapse of the wavefunction onto eigenstate of the observable which corresponds to the observed eigenvalue. Another paradigm is protective measurements which are able to preserve the coherence of the quantum state during the measurement process and allows to extract the expectation value of an observable by measuring a single quantum system~\cite{Piacentini2017}.

A typical quantum control problem is formulated as maximization of some objective functional (fidelity) which represents the desired property of the system which should be optimized. An important topic is the analysis of the dynamic and kinematic quantum control landscapes~\cite{Rabitz2004,Pechen2011,deFouquieresSchirmer}. The dynamical control landscape is defined as graph of the objective functional as a functional of the control. The kinematic control landscape is formed by representing the dynamics of the controlled quantum system by transformation maps (e.g., by unitary transformations or Kraus maps) instead of differential evolution equations and considering parameters of these maps as controls. 

Key points of the control landscape are global maxima (for maximization of the objective) or global minima (for minimization of the objective). {\it Traps} are local but not global maxima (resp., minima) of the control landscape. An important type of critical points are {\it second-order traps} --- critical points where Hessian of the objective is negative (resp., positive) semi-definite, while not strictly negative (resp., positive). At such points, the objective still may grow (resp., decrease) along directions in the control space which correspond to zero eigenvalues of the Hessian, but the growth is slower than second order of small variation of the control. Second order traps were introduced and studied in~\cite{Pechen2011} (see also~\cite{VolkovUMN2023}). The importance of the analysis of kinematic control landscapes was shown for closed systems in~\cite{Rabitz2004}. For open quantum systems the analysis of the kinematic control landscape was performed in~\cite{PechenBrifPRA2010} and unified analysis for both quantum and classical systems in~\cite{PechenEPL2010}. Applications to problems in chemistry were found~\cite{C0SC00425A,C1CP20353C}.

\begin{figure}[t]
\centering
\includegraphics[width=.2\textwidth]{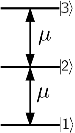}
\caption{Energy level structure and allowed transitions between the energy levels of the three-level quantum system with dynamicsl symmetry. Maximum of the transition probability $|1\rangle\to|2\rangle$ by only coherent control is $1/2$.}
\label{Fig1}
\end{figure}

Quantum systems with dynamical symmetries are of particular interest in quantum control~\cite{Hioe_1983,Turinici_Rabitz_2001,ShuangPRA2008,Sugny_Kontz_2008,Engelhardt_Cao_2020}. Such systems are not completely controllable by means of coherent control, while they may have no Hermitian observables invariant under {\it all} controls (while for any particular control such invariant observable may exist). Instead of that they may have some different conservation laws. This symmetry can be broken using some kind of incoherent control, as was done for example using back-action of non-selective quantum measurements in~\cite{Shuang2008} for a three-level system with dynamical symmetry (see Fig.~\ref{Fig1}). For this system, maximum transition probability $P_{1\to 2}$ between the ground and the intermediate state using only coherent control was shown to be $1/2$~\cite{Turinici_Rabitz_2001}. In~\cite{ShuangPRA2008} maximum of the transition probability when coherent control is assisted by non-selective measurement of population of the ground state was computed and shown to be $0.004(\sqrt{393-48\sqrt{6}}+138+7\sqrt{6})\approx 0.687$. An important analysis of this system was performed in~\cite{Sugny_Kontz_2008}, where the following problem was considered and solved: find an observable such that its von Neumann measurement along with coherent control allows to completely steer the initial state into a target state. In this case the choice of the measured observable is considered as control according to~\cite{PechenRabitz2006}.

In this work, we consider this three-level quantum system in more details. We find and characterize all critical points of the transition probability $P_{1\to 2}$ and in this sense we completely describe its kinematic quantum control landscape, which we show consists of global maxima, global minima, saddle points and second order traps. For comparison, we  compute maximum of the measurement-assisted transition probability $P_{1\to 3}$ between the ground and the upper excited state an well as maximum of both these transition probabilities assisted by measurement of the intermediate state $|2\rangle$.

The structure of this work is the following. Sec.~\ref{Sec:MeasurementControl} reminds general idea of control by back-action of quantum measurements. In Sec.~\ref{Sec2} we provide the formulation of the considered problem. Sec.~\ref{Sec3} contains necessary results from spin-1 representation theory. Sec.~\ref{Sec4} contains preliminary computation of the measurement-assisted transition probabilities in the kinematic representation of controls. Sec.~\ref{Sec5} contains a detailed analysis of the kinematic control landscape for transition probability $P_{1\to 2}$. Sec.~\ref{Sec6} contains computation of maximal values of the transition probability $P_{1\to 3}$ assisted by measurement of the state $|1\rangle$ or $|2\rangle$ and transition probability $P_{1\to 2}$ assisted by measurement of the state $|2\rangle$. Discussion Section~\ref{SecConclusions} summarizes the results.

\section{Measurement-assisted quantum control}\label{Sec:MeasurementControl}

In this section we discuss in details the scheme of quantum control by back-action of non-selective quantum measurements in the formulaintroduced in~\cite{PechenRabitz2006}.

Measurements performed on a quantum system generally are probabilistic --- measurement outcomes for identical measurements performed  on identically prepared quantum systems vary and are obtained with some probabilities. Another important difference of quantum measurement from measurements performed on classical systems is that measurements on quantum systems not only allow to extract some information about the system but often change the state of the system, and such state change can be significant. 

Consider non-selective measurement of a system observable $O=\sum_i \lambda_i P_i$ (a Hermitian operator) with eigenvalues $\lambda_i\in\mathbb R$ and spectral projectors $P_i$, $P_i=P_i^\dagger$, $P_i^2=P_i$. Eigenvalues $\lambda_i$ are the possible measurement outcomes. If $\rho$ is density matrix of the system before the measurement, then measurement outcome $\lambda_i$ is obtained with probability $p_i={\rm Tr}\,\rho P_i$. If the observer looks at the measurement apparatus and reads the measurement outcome $\lambda_i$, then the system state immediately after the measurement will be transformed to $P_i\rho P_i/p_i$. This reduction is known as collapse of the wave function. Non-selective measurement corresponds to situation when the <<observer>> does not look at the measurement apparatus and does not read the measurement outcome. Outcome for this situation can be described as average of the realization of all possible measurement outcomes with their corresponding probabilities. In this situation the measurement outcome is the average value $\langle O\rangle=\sum_i p_i\lambda_i={\rm Tr}\,\rho O $ and according to the von Neumann-L\"uders postulate~\cite{Luders1951,vonNeumann1955} non-selective measurement of the observable $O$ transforms the density matrix as 
\begin{equation}\label{Eq:Measurement1}
\rho\to {\cal M}_{O}(\rho)= \sum\limits_i P_i \rho P_i.
\end{equation}
In particular, the non-selective measurement of population $O=P_\psi=|\psi\rangle\langle\psi|$ of some state $|\psi\rangle$  transforms the density matrix as
\begin{equation}\label{Eq:Measurement2}
\rho\to {\cal M}_{P_\psi}(\rho)= P_\psi \rho P_\psi+(\mathbb{I}-P_\psi)\rho (\mathbb{I}-P_\psi),
\end{equation}
where $\mathbb{I}$ is the identity operator. In quantum control, quantum measurements can be used for real-time feedback which was developed for models of quantum optics in~\cite{WisemanPRL1993,WisemanPRA1994,WisemanMilburnBook}. In this scheme, some control (e.g., a shaped laser field) continuously acts on the controlled quantum system, during this process a discrete or continuous selective measurement of certain observable is performed, and the measurement outcome is processed to modify the applied control in the real time. Various achievements were made in quantum feedback control. Coherent quantum feedback strategy was proposed~\cite{LloydPRA2000}. The problem of quantum feedback control in the context of established formulations of classical control theory was discussed~\cite{DohertyPRA2000}. Hamilton--Jacobi--Bellman equation for quantum optimal feedback control was derived~\cite{Gough2005}. The ability to simulate universal dynamical control through the repeated application of only two coherent control operations and a simple “Yes-No” measurement was shown~\cite{LloydPRA2001}. For controlling position and momentum observables, experimental measurement-based quantum control of the motion of a millimetre-sized membrane resonator was demonstrated~\cite{RossiNature2018}. An approach to information transfer in spintronics networks via the design of time-invariant feedback control laws without recourse to dynamic control was proposed~\cite{SchirmerIEEE2018}. Applications of quantum feedback to quantum engineering are discussed in the review~\cite{Gough2012}. A measurement-based feedback cooling operating deep in the sideband-unresolved limit was demonstrated~\cite{Guo2023}. Feedback quantum control with deep reinforcement learning was applied to driving a system with double well potential with high fidelity toward the ground state~\cite{BorahPRL2021}. Feedback control was proposed for controlling a solid-state qubit~\cite{RuskovPRB2002}, non-linear systems~\cite{Jacobs2007}, quantum state manipulation~\cite{JamesPRA2014}, multi-qubit entanglement generation~\cite{ZhangPRA2020}, making coupled-qubit-based thermal machines~\cite{BhandariEntropy2023}, steering to the minimum energy eigenstate of an energy function in variational quantum algorithms~\cite{Clausen2023}, etc.

Real-time feedback control can potentially be very powerful, but difficult to realize experimentally due to necessity of fast real-time processing of measurement outcome. However, since non-selective quantum measurements themselves modify the state of the system, they potentially can be used for control even without feedback. Such scheme has an advantage of not using fast real-time processing of measurement outcome, but at the cost of necessity to measure various quantum observables. The mathematical formulation of using non-selective quantum measurements with or without coherent quantum control was developed in~\cite{PechenRabitz2006}, where apart of the general formulation the two-level case was completely analytically solved --- for this case optimal measured observables and maximum attained fidelity for maximizing transition probability in a two-level quantum system by any given number of non-selective measurements were analytically computed. The general formulation includes coherent or incoherent control during time intervals $[0,t_1),\dots, (t_k,t_{k+1}),\dots, (t_{K-1},t_K=T]$ inducing on each interval the CPTP dynamical map $\Phi_{(t_k,t_{k+1})}$, and non-selective measurements of observables $O_k$ at time moments $t_k$. The overall evolution of the initial density matrix $\rho_0$ will be
\begin{equation}
	\rho(T)=\Phi_{(t_N,T]}\circ{\cal M}_{Q_N}\circ
	\Phi_{(t_{N-1},t_N)}\circ{\cal M}_{Q_{N-1}}\circ \dots\circ
	\Phi_{(t_1,t_2)}\circ{\cal M}_{Q_1}\circ
	\Phi_{[0,t_1)}(\rho_0).\label{eq2}
\end{equation}
The control goal is to optimize, in addition to coherent and incoherent controls, also the measured observables $O_k$ to maximize a given control objective (e.g., to realize an optimal state transfer).

Another view on this type of quantum control can be described via famous Zeno and anti-Zeno effects. Quantum Zeno effect, predicted by L. Khalfin~\cite{Khalfin1,Khalfin2} and B. Misra and E.~C.~G.~Sudarshan~\cite{MisraSudarshaJMP1977}, states that making a continuous non-selective measurement of population of some state of a quantum system, which has its own internal evolution, leads to freezing the system in the measured state, independent of the internal system evolution.  It reminds the arrow paradox (aporia) of Greek philosopher Zeno of Elea, where for observation of a flying arrow at any one (duration-less) instant of time, the arrow is neither moving to where it is, nor to where it is not. Then it is concluded that since at every instant of time there is no motion occurring, if the arrow is motionless at every instant and time would be entirely composed of instants, then motion of the arrow would be impossible. In quantum anti-Zeno (or dynamical Zeno) effect~\cite{BalachandranRoyPRL2000} one performs a non-selective measurement of some time-dependent state $P_t=|\psi(t)\rangle\langle\psi(t)|$ on the quantum system which evolves with its internal dynamics. As a result, in the limit of continuous measurements independently of the internal dynamics the system will follow the time-dependent state of the measured observable, so that the system density matrix at time $t$ will be $|\psi(t)\rangle\langle\psi(t)|$. The anti-Zeno effect can be used for controlling quantum system if one considers the time-dependent measured observable as control. If $|\psi(T)\rangle =|\psi_{\rm target}\rangle$, then non-selective measurement of $P_t$ via the anti-Zeno dynamics transfers the system to the target state $|\psi_{\rm target}\rangle$, thereby realizing state transfer of the system. Complete realization of such state transfer can be done in the limit of continuous measurement of the observable $P_t$, that can be difficult to perform. Therefore a natural question is how to produce best approximation of the anti-Zeno dynamics by a fixed number $N$ of quantum measurements. For the two-level case, such optimal approximation of anti-Zeno effect by $N$ quantum measurements was found~\cite{PechenRabitz2006}. Maximum transition probability was computed to be
\[
P^{\max}_{i\to f}(N)=\frac{1}{2}
\Bigl[1+\Bigl(\cos\frac{\Delta\varphi}{N+1}\Bigr)^{N+1}\Bigr]<1,
\]
where $\Delta\varphi$  is the angle between Bloch vectors of the initial and target states. The anti-Zeno effect is obtained in the limit of an infinite number of measurements when the interval between any two consecutive measurements tends to zero, $\lim\limits_{N\to\infty}P^{\max}_{i\to f}(N)=1$. 

Thus the considered here based on~\cite{PechenRabitz2006} scheme of control by using back-action of non-selective measurements is different both from quantum feedback control and from Zeno-type control. From feedback it is different since it does not read the measurement outcome. From Zeno-type control it is different since it uses a finite number of measurements. These differences lead to advantages and disadvantages. Advantages are in the simpler experimental realization, since there is no need of real-time feedback and no need of continuous measurements. The potential disadvantage is in the less efficiency of control. How significant is this decrease of the efficiency of course depends on the particular control problem and sufficient level of fidelity which one needs to obtain.

\section{Formulation of the problem}\label{Sec2}

We consider a three-level quantum system with dynamical symmetry, with the free and interaction Hamiltonians defined as (we set Planck constant $\hbar=1$)
\begin{equation}\label{eq:Ham}
H_0=\begin{pmatrix}
0 & 0 & 0\\
0 & 1 & 0\\
0 & 0 & 2
\end{pmatrix},\quad
V=
\mu\begin{pmatrix}
0 & 1 & 0\\
1 & 0 & 1\\
0 & 1 & 0
\end{pmatrix}.
\end{equation}
The energy level structure of this system is shown on Fig.~\ref{Fig1}. Without loss of generality one can set $\mu=1$. Physically, such Hamiltonian represents a three-level system whose interaction with the laser field consists of a dipolar interaction with constant dipolar terms coupling only neighboring states, and is equivalent to a spin-1 particle which was studied in~\cite{Hioe_1983}. Coherent dynamics of more general spin-like quantum systems was studied in~\cite{CookPRA1979}. Controllability of such system was studied in~\cite{Turinici_Rabitz_2001}, where it was shown that such system is not completely controllable by only coherent control. As mentioned above, maximum transition probability for coherent control assisted by single von Neumann measurement of the population of the ground state for this system is  $0.004(\sqrt{393-48\sqrt{6}}+138+7\sqrt{6})\approx 0.687$~\cite{ShuangPRA2008}. In~\cite{Sugny_Kontz_2008}, the problem of funding an observable such that its von Neumann measurement along with coherent control allows to completely steer the initial state into a target state was solved.

Unitary evolution operator of this system evolving under the action of a coherent control $f\in L^2([0,T];\mathbb R)$ satisfies the Schr\"odinger equation
\begin{equation} \label{SchrEq}
    \frac{dU_t^f}{dt}=-i(H_0+f(t)V)U_t^f,\quad U_{t=0}^f= \mathbb{I}\,.
\end{equation}
where $\mathbb I$ is $3\times 3$ identity matrix. Time-evolved pure state of this system can be written as
\[
|\psi(t)\rangle =U^{f}_t|\psi(0)\rangle=C_1(t)|1\rangle+C_2(t)|2\rangle+C_3(t)|3\rangle,\quad C_i(t)\in\mathbb C,\quad i=1,2,3
\]
The coefficients in this expansion satisfy the normalization condition $|C_1(t)|^2+|C_2(t)|^2+|C_3(t)|^2=1$. 

While this system is not controllable, it has no Hermitian observable which is conserved by {\it all} controls, i.e. there is no Hermitian observable $O$ such that for any $f$ $[H_0+f(t)V,O]=0$ (if such an observable would existed, its average value would be determined by the initial state and could not be modified by any control). Indeed, this condition can be satisfied for any $f$ only if $[H_0,O]=[V,O]=0$, that holds only for $O={\rm const}\cdot \mathbb I$ as can be directly checked. For any particular control $f$, an invariant operator exists, as was shown in~\cite{Lai1996} where such operator was even explicitly constructed; however, this invariant operator depends on $f$. Instead of absence of $f$-independent conserved Hermitian operator, it is known~\cite{Turinici_Rabitz_2001} that this system has the conservation law 
\be\label{Eq:Constraint}
\left|C_1(t)C_2(t)-\frac{C_3(t)^2}{2}\right|={\rm Const}\,.
\ee
Note that the quantity under modulus in this constraint is also conserved. This conservation law, as was shown in~\cite{Turinici_Rabitz_2001}, limits maximum of the transition probability $P_{1\to 2}(f)$ from state $|1\rangle$ to state $|2\rangle$ over all coherent controls $f$ to be
\[
\max\limits_f P_{1\to 2}(f)=\frac12.
\]
The transition probability is computed as $P_{1\to 2}(f)=|C_2(T)|^2$ for a large enough $T$, where the coefficients satisfy the initial condition $C_1(0)=1$, $C_2(0)=C_3(0)=0$. 

Consider now non-selective measurement of a system observable $O$ (a Hermitian operator) with spectral projectors $P_i$. If $\rho$ is density matrix of the system before the measurement, then non-selective measurement of $O$ transforms the density according to Eq.~(\ref{Eq:Measurement1}).
In particular, non-selective measurement of population $O=P_\psi=|\psi\rangle\langle\psi|$ of some state $|\psi\rangle$  transforms the density matrix according to Eq.~(\ref{Eq:Measurement2}).

Suppose that the quantum system~(\ref{eq:Ham}) evolves under the action of a coherent control $f_1$ during time interval $[-T_1,0)$ with some sufficiently long time $T_1\ge T^*$, then at $t=0$ non-selective measurement of population of some state $|\psi\rangle$ is performed, and finally during time interval $(0,T_2]$ with $T_2\ge T^*$ the system again evolves under the action of some coherent control $f_2$. Here $T^*$ is the minimal time during which all allowed unitary evolutions can be generated. Then its final density matrix at $t=T_2$ will be 
\begin{equation}\label{state}
\rho(T_2)=U_{f_2}{\cal M}_{P_{\psi}}[U_{f_1}\rho(-T_1) U^\dagger_{f_1}] U^\dagger_{f_2}.
\end{equation}

In particular, probabilities of transitions from state $|1\rangle$ to state $|2\rangle$ and from state $|1\rangle$ to state $|3\rangle$ when measurement of population of either $|\psi\rangle=|1\rangle$ or $|\psi\rangle=|2\rangle$ is performed are defined by setting $\rho(-T_1)=|1\rangle\langle 1|$ in~(\ref{state}):
\begin{align}\label{Eq:Dynamic}
P_{1\to k}(f_1,{\cal M}_{|j\rangle\langle j|}, f_2)&=\langle k|U_{f_2}{\cal M}_{|j\rangle\langle j|}(U_{f_1}|1\rangle\langle 1| U^\dagger_{f_1}) U^\dagger_{f_2}|k\rangle,
\end{align}
where $k=2,3$ and $j=1,2$.  Numerically constructed kinematic control landscapes are plotted on Fig.~\ref{FigLandscape}. Below we perform the theoretical analysis of these control landscapes.

\begin{figure}[t!]
\includegraphics[width=\linewidth]{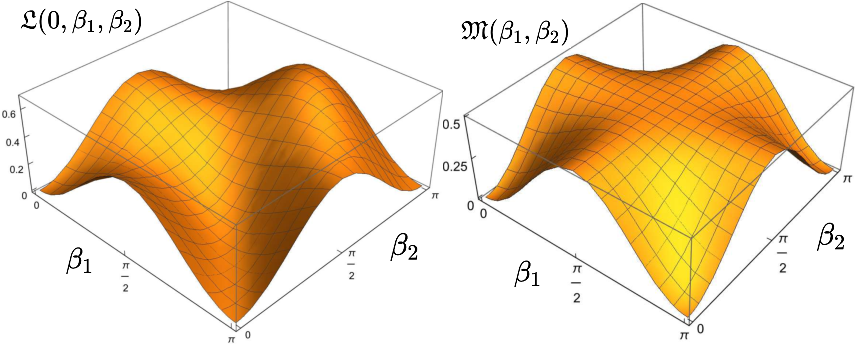}
\caption{Kinematic control landscapes of the transition probability $P_{1\to 2}(U(1),{\cal M}_{|i\rangle\langle i|}, U(2))$. Left: $i=1$ and $\Omega=0$ (i.e., plot of the function $\mathfrak{L}_1(0, \beta_1, \beta_2)$; see below). Right: $i=2$ (i.e., plot of the function $\GG(\beta_1, \beta_2)$; see below). Left landscape has second order-traps (see Theorem~\ref{theorem1}), whereas right landscape is completely trap-free (see Theorem~\ref{theorem3}).}
\label{FigLandscape}
\end{figure}

\section{Spin-1 representation theory}\label{Sec3}
Since we study kinematic control landscape of the system under two coherent controls, we need to consider the manifold
\begin{equation}\label{KinematicManifold}
    \mathcal{R}\times\mathcal{R}, \quad \mathcal{R} = \{SU \in \mathcal{U}(3)\, | \, U \in \text{spin-1 representation of } \mathcal{SU}(2) \text{ Lie group} \}.
\end{equation}
Now, we describe parameterization of $\mathcal{R}$.

For this system, the free and interaction Hamiltonians generate a spin-1 representation of the Lie algebra $\mathfrak{su}(2)$. Let $J_x$, $J_y$ and $J_z$ be standard generators of spin-$1$ representation of the Lie algebra $\mathfrak{su}(2)$ with matrices
\begin{equation*}
J_x = \frac{1}{\sqrt{2}}\begin{pmatrix}
0 & 1 & 0 \\
1 & 0 & 1 \\
0 & 1 & 0
\end{pmatrix}, \quad
J_y = \frac{1}{\sqrt{2}}\begin{pmatrix}
0 & -i & 0 \\
i & 0 & -i \\
0 & i & 0
\end{pmatrix}, \quad J_z = \begin{pmatrix}
1 & 0 & 0 \\
0 & 0 & 0 \\
0 & 0 & -1
\end{pmatrix},
\end{equation*}
which are written in the eigenbasis of $J_z$ denoted by $\{\ket{1, -1}, \ket{1, 0}, \ket{1, 1}\}$, first value denote spin, second value denote spin projection. Below we will use the notation $\ket{1, -1} \equiv \ket{1}, \ket{1, 0}  \equiv \ket{2}, \ket{1, 1} \equiv \ket{3}$.

In the kinematic representation, we consider allowed evolution operators as given by Wigner's D-matrices which are parameterized by Euler angles $\alpha, \beta, \gamma$~\cite{SakuraiBook,WignerBook}.

The first parameterization that we will use is
\begin{equation*}
U(\alpha, \beta, \gamma)=e^{-i\alpha J_z}\cdot e^{-i\beta J_y}\cdot e^{-i\gamma J_z}\, ,
\end{equation*}
which we call $ZYZ$ parameterization. It is defined for $(\alpha, \beta, \gamma) \in (-\pi, \pi]\times[0, \pi)\times(-\pi, \pi]$. For $\beta = 0$ one has
\begin{equation*}
U(\alpha, 0, \gamma) = \begin{pmatrix}
e^{-i(\alpha+\gamma)} & 0 & 0 \\
0 & 1 & 0 \\
0 & 0  & e^{i(\alpha+\gamma)}
\end{pmatrix},
\end{equation*}
so that different pairs of $\alpha$ and $\gamma$ with the same $\alpha+\gamma$ determine one unitary operator~\cite{KuprovBook}. The subset
\begin{equation} \label{singSet}
    \mathcal{B} := \{U\in \mathcal{R}\,|\, U = \text{diag}(e^{-i\phi}, 1, e^{i\phi}),\,\phi \in [0, 2\pi)\}\subset \mathcal{R}
\end{equation}
has to be parameterized in another way.

The second parameterization is
\begin{equation*}
U(\alpha, \beta, \gamma)=e^{-i\alpha J_y}\cdot e^{-i\beta J_z}\cdot e^{-i\gamma J_y}\, ,
\end{equation*}
which we call $YZY$ parameterization. It is defined for $(\alpha, \beta, \gamma) \in (-\pi, \pi]\times[0, \pi)\times(-\pi, \pi]$ and has the same degeneracy for $\beta=0$.

The two matrix exponents appearing in both parametrizations have the form
\begin{align}\label{expJ_y}
    e^{-i\phi J_y} =& \frac{1}{2}\left(\begin{array}{crr}
    1+\cos \phi & -\sqrt{2} \sin \phi & 1-\cos \phi \\
    \sqrt{2}\sin \phi & \cos \phi & -\sqrt{2}\sin \phi \\
    1-\cos \phi & \sqrt{2} \sin \phi & 1+\cos \phi
    \end{array}\right),\\
    e^{-i\phi J_z} =& \left(\begin{array}{lll}
    e^{-i\phi} & 0 & 0 \\
    0 & 1 & 0 \\
    0 & 0 & e^{i\phi}
    \end{array}\right).\label{expJ_z}
\end{align}

\section{Transition probabilities driven by coherent control assisted by one non-selective measurement}\label{Sec4}

Coherent controls $f_1$ and $f_2$ are called {\it dynamic controls}. The dynamic control landscape is defined as graph of the transition probability $P_{1\to 2}(f_1,f_2)$  considered as a functional of coherent controls $f_1$ and $f_2$. For control landscape analysis it is convenient to consider coherent controls as elements of the $L^2$ space so that the transition probability becomes a functional $P_{1\to 2}:L^2\times L^2\to[0,1]$ (not necessarily a surjection). 

Kinematic controls are  parameters determining allowed unitary evolutions which are elements of $\mathcal R$. Below we consider the transition probabilities as functions of the kinematic controls as in~\cite{ShuangPRA2008}. In the kinematic representation, the system during the first period of the evolution evolves with some unitary evolution operator $U(1)=U^{f_1}$, defined by Eq.~(\ref{SchrEq}) with $f=f_1$, then one performs non-selective measurement (we consider measurement of population of either state $|1\rangle$ or $|2\rangle$), and finally after the measurement the system evolves with another evolution operator $U(2)=U^{f_2}$, defined by Eq.~(\ref{SchrEq}) with $f=f_2$.

The transition probabilities in the kinematic representation, i. e. as functions on $\mathcal{R}\times \mathcal{R}$, become
\begin{align}
P_{1\to k} 
(U(1), {\cal M}_{|j\rangle\langle j|}, U(2)) 
=\langle k|U(2){\cal M}_{|j\rangle\langle j|}[U(1)|1\rangle\langle 1| U^\dagger(1)] U^\dagger(2)|k\rangle
\end{align}
Below, we consider $k=2,3$ and $j=1,2$. We are mainly concentrated on the most non-trivial case $k=2$, $j=1$, for which we study the kinematic landscape in details. For comparison in section~\ref{Sec6} we also compute the maximal transition probabilities for the cases $(k,j)=(2,2), (3,1), (3,2)$ and numerically plot some of the corresponding control landscapes.

The initial system state (before application of the coherent control $f_1$) is $\rho(-T_1)=|1\rangle\langle 1|$. The first control transforms the initial state into
\begin{equation*}
\rho\left(0-\right)=U(1)|1\rangle\langle 1|U^{\dagger}(1).
\end{equation*}

Consider the case $j=1$. After performing non-selective measurement of $P_{\ket{1}}=|1\rangle\langle 1|$, the system density matrix will be
\begin{align*}
\rho\left(0+\right)=&P_{\ket{1}}\rho\left(0-\right)P_{\ket{1}}+(\mathbb I-P_{\ket{1}})\rho\left(0-\right)(\mathbb I-P_{\ket{1}})=\left|U_{11}(1)\right|^2|1\rangle\langle 1|\\
&+\left|U_{21}(1)\right|^2|2\rangle\langle 2|
+\left|U_{31}(1)\right|^2|3\rangle\langle 3|+U_{21}(1)\overline{U_{31}(1)}|2\rangle\langle 3|+\overline{U_{21}(1)}U_{31}(1)|3\rangle\langle 2|.
\end{align*}
Finally, 
\begin{align}\label{TP1}
\bra{k}\rho(T_2)\ket{k}=&\bra{k}U(2)\rho\left(0+\right)U^{\dagger}(2)\ket{k}=|U_{11}(1)|^2|U_{k1}(2)|^2+|U_{21}(1)|^2|U_{k2}(2)|^2\nonumber\\
&+|U_{31}(1)|^2|U_{k3}(2)|^2+2\Re[U_{21}(1)\overline{U_{31}(1)}U_{k2}(2)\overline{U_{k3}(2)}].
\end{align}
    
Consider the case $j=2$. After performing non-selective measurement of $P_{\ket{2}}=|2\rangle\langle 2|$, the system density matrix will be
\begin{align*}
\rho\left(0+\right)=&P_{\ket{2}}\rho\left(0-\right)P_{\ket{2}}+(\mathbb I-P_{\ket{2}})\rho\left(0-\right)(\mathbb I-P_{\ket{2}})=\left|U_{11}(1)\right|^2|1\rangle\langle 1|\\
&+\left|U_{21}(1)\right|^2|2\rangle\langle 2|
+\left|U_{31}(1)\right|^2|3\rangle\langle 3|+U_{11}(1)\overline{U_{31}(1)}|1\rangle\langle 3|+\overline{U_{11}(1)}U_{31}(1)|3\rangle\langle 1|.
\end{align*}
Finally, 
\begin{align}\label{TP2}
\bra{k}\rho(T_2)\ket{k}=&\bra{k}U(2)\rho\left(0+\right)U^{\dagger}(2)\ket{k}=\left|U_{11}(1)\right|^2\left|U_{k1}(2)\right|^2+\left|U_{21}(1)\right|^2\left|U_{k2}(2)\right|^2\nonumber\\
&+\left|U_{31}(1)\right|^2\left|U_{k3}(2)\right|^2+2\Re[U_{11}(1)\overline{U_{31}(1)}U_{k2}(2)\overline{U_{k3}(2)}].
\end{align}

\section{Kinematic quantum control landscape of the transition probability}\label{Sec5}
The kinematic control landscape is a function on $\mathcal{R}\times \mathcal{R}$ and from Eq.~(\ref{TP1}) we have that the kinematic landscape of the transition probability is 
\begin{align}
P_{1\to 2}(U(1),{\cal M}_{|1\rangle\langle 1|}, U(2))=&\left|U_{11}(1)\right|^2\left|U_{21}(2)\right|^2+\left|U_{21}(1)\right|^2\left|U_{22}(2)\right|^2+\left|U_{31}(1)\right|^2\left|U_{23}(2)\right|^2\nonumber\\
&+2\Re[U_{21}(1)\overline{U_{31}(1)}U_{22}(2)\overline{U_{23}(2)}].\label{Eq:Kinematic}
\end{align}
Since we consider the 3-dimensional representation of $\mathcal{SU}(2)$ Lie group, we choose Euler angle parameterization (see Sec.~\ref{Sec3}) to study the kinematic control landscape. The parameters $(\alpha, \beta, \gamma)$ in the unitary evolution operator~(\ref{Ukinematic}) are called {\it kinematic controls}. We denote $c_i=(\alpha_i, \beta_i, \gamma_i)$ for $i=1,2$. Thus transition probability is considered as a function of the kinematic control parameters $(\alpha_1, \beta_1, \gamma_1, \alpha_2, \beta_2, \gamma_2)$ on the corresponding open charts. 

First we consider $ZYZ$ parameterization to find critical points on $\left(\mathcal{R}\backslash\mathcal{B}\right)\times \left(\mathcal{R}\backslash\mathcal{B}\right)$. From~(\ref{expJ_y}),~(\ref{expJ_z}) we get the following matrix representation of the evolution operator
\begin{equation}\label{Ukinematic}
U(\alpha, \beta, \gamma) = \begin{pmatrix}
e^{-i(\alpha+\gamma)}\cos^2{\frac{\beta}{2}} & -\frac{e^{-i\alpha}}{\sqrt{2}}\sin{\beta} & e^{-i(\alpha-\gamma)}\sin^2{\frac{\beta}{2}} \\
\frac{e^{-i\gamma}}{\sqrt{2}}\sin{\beta} & \cos{\beta} & -\frac{e^{i\gamma}}{\sqrt{2}}\sin{\beta} \\
e^{i(\alpha-\gamma)}\sin^2{\frac{\beta}{2}} & \frac{e^{i\alpha}}{\sqrt{2}}\sin{\beta}  & e^{i(\alpha+\gamma)}\cos^2{\frac{\beta}{2}}
\end{pmatrix}.
\end{equation}
Using~(\ref{Ukinematic}) we obtain
\begin{align*}
& P_{1\to 2}(c_1,{\cal M}_{|1\rangle\langle 1|}, c_2)=\\
&=
\frac{1}{8}\sin^2{\beta_2}\left(3+\cos{2\beta_1}\right)+\frac{1}{2}\sin^2{\beta_1}\cos^2{\beta_2}-\frac{1}{2}\cos{\Omega}\sin{\beta_1}\sin^2{\frac{\beta_1}{2}}\sin{2\beta_2}=:\LL(\Omega,\beta_1, \beta_2),
\end{align*}
where $\Omega=\alpha_1+\gamma_2$ and $\LL$ is the control landscape function. Hence the transition probability $P_{1\to 2}(c_1,{\cal M}_{|1\rangle\langle 1|}, c_2)$ is a function of only three kinematic parameters $\Omega,\beta_1,\beta_2$. 

The domain of the graph is $(\Omega,\beta_1,\beta_2)\in(-\pi, \pi]\times(0, \pi)\times(0, \pi)$. For $\Omega=0$, the kinematic landscape of the transition probability $P_{1\to 2}(c_1,{\cal M}_{|1\rangle\langle 1|}, c_2)=\LL(0,\beta_1, \beta_2)$ is shown on Fig.~\ref{FigLandscape}. As mentioned in the Introduction, in~\cite{ShuangPRA2008} it was shown that
\be\label{Eq:Max1}
\max\limits_{c_1,c_2} P_{1\to 2}(c_1,{\cal M}_{|1\rangle\bra{1}}, c_2)= 0.004\times(\sqrt{393-48\sqrt{6}}+138+7\sqrt{6}).
\ee

\begin{Lemma}\label{theorem1} All critical points of $\LL(\Omega,\beta_1, \beta_2)$, their types and corresponding values of the function are given in Table~\ref{Table1}.
\begin{center}
\begin{table}
\begin{tabular}{|p{8.7cm}|c|c|} 
\hline
$(\Omega,\beta_1, \beta_2)$ & $P_{1\to 2}$ & type of the point \\ 
 \hline
 $\left(\pm \frac{\pi}{2}, \frac{\pi}{2}, \frac{\pi}{2}\right)$ & $1/4$ & saddle point \\ 
 \hline
 $\left(\pi, \arccos\left(\frac{-1+\sqrt{5}}{2}\right), \pi-\frac{1}{2}\arctan\left(2\sqrt{2+\sqrt{5}}\right)\right)$,
 $\left(0, \arccos\left(\frac{-1+\sqrt{5}}{2}\right), \frac{1}{2}\arctan\left(2\sqrt{2+\sqrt{5}}\right)\right)$ & $1/4$ & saddle point \\ 
 \hline
  $\left(\pi, \arccos\left(\frac{1+\sqrt{6}}{5}\right), \frac{1}{2}\arccos\left(\frac{1}{1-\sqrt{6}}\right)\right)$, $\left(0, \arccos\left(\frac{1+\sqrt{6}}{5}\right), \pi-\frac{1}{2}\arccos\left(\frac{1}{1-\sqrt{6}}\right)\right)$  & $\frac{3}{50}(9-\sqrt{6}) \approx 0.393$& saddle point \\ 
 \hline
$\left(\pi, \arccos\left(\frac{1-\sqrt{6}}{5}\right), \frac{1}{2}\arccos\left(\frac{1}{1+\sqrt{6}}\right)\right)$, $\left(0, \arccos\left(\frac{1-\sqrt{6}}{5}\right), \pi-\frac{1}{2}\arccos\left(\frac{1}{1+\sqrt{6}}\right)\right)$ & $\frac{3}{50}(9+\sqrt{6}) \approx 0.687$ & global maximum \\ 
\hline
\end{tabular}
\caption{Critical points,  their types and values of the objective function  $\LL(\Omega,\beta_1, \beta_2)$.}
\label{Table1}
\end{table}
\end{center}
\end{Lemma}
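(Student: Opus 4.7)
The plan is to solve $\nabla\LL=0$ on the open chart $(-\pi,\pi]\times(0,\pi)\times(0,\pi)$, enumerate all solutions, and classify each by its $3\times 3$ Hessian. First I compute
\begin{equation*}
\partial_\Omega\LL=\tfrac{1}{2}\sin\Omega\,\sin\beta_1\sin^2(\beta_1/2)\sin 2\beta_2.
\end{equation*}
On the open chart $\sin\beta_1>0$ and $\sin(\beta_1/2)>0$, so $\partial_\Omega=0$ forces either $\sin\Omega=0$ or $\sin 2\beta_2=0$. Thus every critical point falls into Case~A ($\beta_2=\pi/2$) or Case~B ($\Omega\in\{0,\pi\}$).

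Case~A is essentially immediate: when $\beta_2=\pi/2$ the last term of $\LL$ vanishes, $\partial_{\beta_1}$ reduces to $-\tfrac14\sin 2\beta_1=0$ forcing $\beta_1=\pi/2$, and then $\partial_{\beta_2}=-\tfrac12\cos\Omega$ forces $\Omega=\pm\pi/2$; a direct substitution gives $\LL=1/4$, which is row~1 of Table~\ref{Table1}. For Case~B I would first invoke the symmetry $\LL(\pi,\beta_1,\beta_2)=\LL(0,\beta_1,\pi-\beta_2)$ arising from the sign flip of $\sin 2\beta_2$, which pairs the two entries of each of rows~2--4 and lets me restrict to $\Omega=0$. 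Writing $u=\cos\beta_1$, $v=\cos\beta_2$ and using the identity $\partial_{\beta_1}[\sin\beta_1\sin^2(\beta_1/2)]=\sin^2(\beta_1/2)(1+2\cos\beta_1)$, the stationarity conditions become
\begin{align*}
\sin\beta_1\,u\,(3v^2-1)&=\sin\beta_2\,v\,(1-u)(1+2u),\\
\sin\beta_2\,v\,(3u^2-1)&=\sin\beta_1\,(1-u)(2v^2-1).
\end{align*}
Eliminating the ratio $\sin\beta_1/\sin\beta_2$ yields the purely polynomial relation
\begin{equation*}
(1-u)^2(1+2u)(2v^2-1)=u(3u^2-1)(3v^2-1),
\end{equation*}
which is linear in $v^2$; solving for $v^2$ and resubstituting produces a polynomial equation in $u$ alone. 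I expect that, up to the boundary factors $1\pm u$, this polynomial factors as the product of the minimal polynomials $u^2+u-1$ and $5u^2-2u-1$, exposing precisely the three roots $u=(\sqrt5-1)/2$ and $u=(1\pm\sqrt6)/5$ of rows~2--4. The companion values of $v$ then match the stated $\tfrac12\arctan(2\sqrt{2+\sqrt5})$ and $\tfrac12\arccos(1/(1\mp\sqrt6))$.

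Finally, at each of the eight candidates I would (i) substitute into $\LL$ to verify the tabulated values $1/4$, $\tfrac{3}{50}(9-\sqrt6)$, and $\tfrac{3}{50}(9+\sqrt6)$; and (ii) compute the $3\times 3$ Hessian to read off its signature. The global-maximum status of row~4 is automatic because $\tfrac{3}{50}(9+\sqrt6)$ saturates the upper bound~(\ref{Eq:Max1}) from~\cite{ShuangPRA2008}; for the remaining seven points I would verify that the Hessian has eigenvalues of mixed sign, which certifies the saddle-point label. The main obstacle I anticipate is the algebra of Case~B: once the rational expression for $v^2$ is substituted back, the polynomial in $u$ has large nominal degree, and arranging the trigonometric identities so that it cleanly collapses to the product of $u^2+u-1$ and $5u^2-2u-1$---and then converting each admissible $(u,v)$ pair into the closed-form $\arccos/\arctan$ expressions of Table~\ref{Table1}---is where most of the labor will go.
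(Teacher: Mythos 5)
Your plan follows the paper's proof essentially step for step: the same split via $\partial_\Omega\LL=0$ into the cases $\beta_2=\pi/2$ and $\Omega\in\{0,\pi\}$, the same reduction to a polynomial in $u=\cos\beta_1$ whose nontrivial factors are $u^2+u-1$ and $5u^2-2u-1$ (the paper works out $\Omega=\pi$ and carries the points over to $\Omega=0$ via the $\beta_2\mapsto\pi-\beta_2$ symmetry you state explicitly; its elimination uses $1/\cos^2 2\beta_2=1+\tan^2 2\beta_2$ rather than your linear-in-$v^2$ relation, a cosmetic difference), and the same Hessian-signature classification. The one step you must add is the check the paper makes parenthetically: your elimination of the ratio $\sin\beta_1/\sin\beta_2$ divides by $u(3v^2-1)$ and $v(3u^2-1)$, so the degenerate loci $\cos\beta_1=0$ and $\cos 2\beta_1=-1/3$ (and likewise for $\beta_2$) must be inspected separately to confirm no critical points are lost there --- they contribute none --- and note that rows 1--3 leave six saddle candidates, not seven.
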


\begin{proof} Let us find critical points of the function $\LL(\Omega,\beta_1, \beta_2)$ on the domain $(-\pi, \pi]\times(0, \pi)\times(0, \pi)$. Derivatives of $\LL(\Omega,\beta_1, \beta_2)$ are:
\begin{equation*}
\begin{cases}
    \LL(\Omega,\beta_1, \beta_2)^{\prime}_{\Omega}=\frac{1}{2}\sin{\Omega}\sin{\beta_1}\sin^2{\frac{\beta_1}{2}}\sin{2\beta_2}\\
    \LL(\Omega, \beta_1, \beta_2)^{\prime}_{\beta_1}=\frac{1}{8}\left(1+3\cos{2\beta_2}\right)\sin{2\beta_1}-\frac{1}{4}\cos{\Omega}\left(\cos{\beta_1}-\cos{2\beta_1}\right)\sin{2\beta_2}\\
    \LL(\Omega, \beta_1, \beta_2)^{\prime}_{\beta_2}=\frac{1}{8}\left(1+3\cos{2\beta_1}\right)\sin{2\beta_2}-\frac{1}{2}\cos{\Omega}\sin{\beta_1}\cos{2\beta_2}\left(1-\cos{\beta_1}\right)
    \end{cases}.
\end{equation*}
For the later analysis we will need the Hessian of $\LL(\Omega, \beta_1, \beta_2)$:
\begin{equation*}
    \begin{cases}
    \LL(\Omega,\beta_1, \beta_2)^{\prime\prime}_{\Omega\Omega}=\frac{1}{2} \cos {\Omega} \sin ^2{\frac{\beta_1}{2}} \sin {\beta_1} \sin {2 \beta_2}\\
    \LL(\Omega,\beta_1, \beta_2)^{\prime\prime}_{\Omega\beta_1}=\sin {\Omega} \sin ^2{\frac{\beta_1}{2}} \sin {\beta_2} \left(2 \cos {\beta_1}+1\right) \cos {\beta_2}\\
    \LL(\Omega,\beta_1, \beta_2)^{\prime\prime}_{\Omega\beta_2}=\sin {\Omega} \sin ^2{\frac{\beta_1}{2}} \sin {\beta_1} \cos {2 \beta_2}\\
    \LL(\Omega,\beta_1, \beta_2)^{\prime\prime}_{\beta_1\beta_1}=\frac{1}{4} \left[\cos {\Omega} \left(\sin {\beta_1}-2 \sin {2 \beta_1}\right) \sin {2 \beta_2}+\cos {2 \beta_1}
   \left(3 \cos {2 \beta_2}+1\right)\right]\\
    \LL(\Omega,\beta_1, \beta_2)^{\prime\prime}_{\beta_1\beta_2}=-\cos {\Omega} \sin ^2{\frac{\beta_1}{2}} \left(2 \cos {\beta_1}+1\right) \cos {2 \beta_2}-\frac{3}{4} \sin {2
   \beta_1} \sin {2 \beta_2}\\
    \LL(\Omega,\beta_1, \beta_2)^{\prime\prime}_{\beta_2\beta_2}=2 \cos {\Omega} \sin {\beta_1} \sin {2 \beta_2} \sin ^2{\frac{\beta_1}{2}}+\frac{1}{4} \left(3 \cos {2
   \beta_1}+1\right) \cos {2 \beta_2}
    \end{cases}.
\end{equation*}
We search for all the points where gradient of $\LL$ is zero, i.e. $\nabla \LL(\Omega,\beta_1, \beta_2) = 0$. From the condition $\LL(\Omega,\beta_1, \beta_2)^{\prime}_{\Omega} = 0$ we obtain that such extreme points should lie on the surfaces
\begin{equation*}
    \left[
        \begin{array}{ll}
            \Omega = 0, \pi \\
            \beta_2 = \frac{\pi}{2}
        \end{array} .
    \right .
\end{equation*}

Consider these cases separately.
\begin{enumerate}
    
    \item ${\beta_2=\frac{\pi}{2}}$. In this case 
    \begin{equation*}
        \begin{cases}
            \LL(\Omega, \beta_1, \frac{\pi}{2})^{\prime}_{\beta_1}=-\frac{1}{4}\sin{2\beta_1}\\
            \LL(\Omega, \beta_1, \frac{\pi}{2})^{\prime}_{\beta_2}=\cos{\Omega}\sin{\beta_1}\sin^2\frac{\beta_1}{2}
        \end{cases}.
    \end{equation*}
    Thus from the conditions $\LL(\Omega, \beta_1, \frac{\pi}{2})^{\prime}_{\beta_1}=\LL(\Omega, \beta_1, \frac{\pi}{2})^{\prime}_{\beta_2}=0$ we find the  critical points $(\pm \frac{\pi}{2}, \frac{\pi}{2}, \frac{\pi}{2})$. Hessian has the form
    \begin{align*}
        H_{\LL}\left(-\frac{\pi}{2}, \frac{\pi}{2}, \frac{\pi}{2}\right)=&\left(
    \begin{array}{ccc}
     0 & 0 & \frac{1}{2} \\
     0 & \frac{1}{2} & 0 \\
     \frac{1}{2} & 0 & \frac{1}{2} \\
    \end{array}
    \right),\\
    H_{\LL}\left(\frac{\pi}{2}, \frac{\pi}{2}, \frac{\pi}{2}\right)=&\left(
    \begin{array}{ccc}
     0 & 0 & -\frac{1}{2} \\
     0 & \frac{1}{2} & 0 \\
     -\frac{1}{2} & 0 & \frac{1}{2} \\
    \end{array}
    \right).
    \end{align*}
    
    One has $\LL(\pm \frac{\pi}{2},\frac{\pi}{2},\frac{\pi}{2}) = \frac{1}{4}$. Computing Hessian eigenvalues allows to conclude these are saddle points.
    
    \item $\Omega=\pi$. 
    
    In this case 
    \begin{equation}\label{Eq:Case7}
    \begin{cases}
        \LL(\pi, \beta_1, \beta_2)^{\prime}_{\beta_1}=\frac{1}{8}\left(1+3\cos{2\beta_2}\right)\sin{2\beta_1}+\frac{1}{4}\left(\cos{\beta_1}-\cos{2\beta_1}\right)\sin{2\beta_2}=0\\
        \LL(\pi, \beta_1, \beta_2)^{\prime}_{\beta_2}=\frac{1}{8}\left(1+3\cos{2\beta_1}\right)\sin{2\beta_2}+\frac{1}{2}\sin{\beta_1}\cos{2\beta_2}\left(1-\cos{\beta_1}\right)=0
        \end{cases}.
    \end{equation}
    
    which is equivalent to 
    \begin{equation*}
        \begin{cases}
        \left(1+3\cos{2\beta_2}\right)\sin{2\beta_1}=2\left(\cos{2\beta_1}-\cos{\beta_1}\right)\sin{2\beta_2}\\
        4\sin{\beta_1}\cos{2\beta_2}\left(\cos{\beta_1}-1\right)=\left(1+3\cos{2\beta_1}\right)\sin{2\beta_2}
        \end{cases}.
    \end{equation*}
    Dividing second equation by first, we obtain
    \begin{equation*}
        \frac{\cos{\beta_1}(1+3\cos{2\beta_2})}{2\cos{2\beta_2}(\cos{\beta_1}-1)}=\frac{2(\cos{2\beta_1}-\cos{\beta_1})}{1+3\cos{2\beta_1}},
    \end{equation*}
    from which we can express $1/\cos{2\beta_2}$ in terms of $\cos{\beta_1}\equiv x$ as
    \begin{equation*}
        \frac{1}{\cos{2\beta_2}}=\frac{4(\cos{2\beta_1}-\cos{\beta_1})(\cos{\beta_1}-1)}{\cos{\beta_1}(1+3\cos{2\beta_1})} - 3=\frac{2(2x^2-1-x)(x-1)}{x(3x^2-1)}-3.
    \end{equation*}
    We notice, that $\cos{\beta_1}=1$ (which means that $\beta_1=0$) satisfies~(\ref{Eq:Case7}) for $\beta_2=\frac{\pi}{2}$.
    
    Let us express $\tan{2\beta_2}$ from second equation in~(\ref{Eq:Case7})
\begin{equation*}
\tan{2\beta_2}=4\sin{\beta_1}\frac{\cos{\beta_1}-1}{1+3\cos{2\beta_1}}=2\sqrt{1-x^2}\frac{x-1}{3x^2-1}.
\end{equation*}
    Using the trigonometric identity
    \begin{equation*}
        \frac{1}{\cos^2{2\beta_2}}=1+\tan^2{2\beta_2},
    \end{equation*}
    we obtain the following equation for $x$
    \begin{equation}\label{Eq:star}
        \left(\frac{2(2x^2-1-x)(x-1)}{x(3x^2-1)}-3\right)^2=1+4\frac{(x-1)^2(1-x^2)}{(3x^2-1)^2}.
    \end{equation}
    (We also need to consider $\cos{2\beta_1}=-\frac{1}{3}$ and $\cos{\beta_1}=0$, to make sure that we have not lost any solutions. From the  system~(\ref{Eq:Case7}) we notice that these two cases are not the solutions for any $\beta_2$).
    
    Equation~(\ref{Eq:star}) can be solved analytically. Finally we have
    \begin{equation*}
        (x+1)^2\left(x-\frac{-1-\sqrt{5}}{2}\right)\left(x-\frac{-1+\sqrt{5}}{2}\right)\left(x-\frac{1-\sqrt{6}}{5}\right)\left(x-\frac{1+\sqrt{6}}{5}\right)=0.
    \end{equation*}
    
    Finding $(\beta_1, \beta_2)$ from this equation and combining with obtained previously we get the following critical points (which could be either extrema or saddle points)
    \begin{align}\label{Eq:crits}
    \left[
    \begin{array}{ll}
        \left(\beta_1^{I}, \beta_2^{I}\right)=\left(\arccos\left(\frac{1+\sqrt{6}}{5}\right), \frac{1}{2}\arccos\left(\frac{1}{1-\sqrt{6}}\right)\right)\approx (0.810,1.166) \\
        \left(\beta_1^{II}, \beta_2^{II}\right)=\left(\arccos\left(\frac{1-\sqrt{6}}{5}\right), \frac{1}{2}\arccos\left(\frac{1}{1+\sqrt{6}}\right)\right)\approx (1.865,0.638)\\
        \left(\beta_1^{III}, \beta_2^{III}\right)=\left(\arccos\left(\frac{-1+\sqrt{5}}{2}\right), \pi-\frac{1}{2}\arctan\left(2\sqrt{2+\sqrt{5}}\right)\right)\approx(0.905, 2.475)
    \end{array}.
    \right .
    \end{align}
    
    The corresponding values of the objective function are
    \begin{align*}
    \LL(\Omega,\beta_1, \beta_2) =
    \left[
    \begin{array}{ll}
        \frac{1}{4},\quad \left( \pi, \beta_1^{III}, \beta_2^{III}\right)\\
        \frac{3}{50} (9-\sqrt{6}), \quad\left(\pi, \beta_1^{I}, \beta_2^{I}\right)\\
        \frac{3}{50} (9+\sqrt{6}), \quad \left(\pi, \beta_1^{II}, \beta_2^{II}\right)
    \end{array}.
    \right .
    \end{align*}
    
    We compute the second derivatives to determine the type of each of these critical points
    \begin{equation*}
    \begin{cases}
        \LL(\pi, \beta_1, \beta_2)^{\prime\prime}_{\Omega\Omega}=-\frac{1}{2} \sin ^2{\frac{\beta_1}{2}} \sin{\beta_1} \sin {2\beta_2}\\
        \LL(\pi, \beta_1, \beta_2)^{\prime\prime}_{\beta_1\beta_1}=\frac{1}{4} \left[\left(2 \sin {2 \beta_1}-\sin
   {\beta_1}\right) \sin {2 \beta_2}+\cos {2
   \beta_1} \left(3 \cos {2 \beta_2}+1\right)\right]\\
        \LL(\pi, \beta_1, \beta_2)^{\prime\prime}_{\beta_1\beta_2}=\sin ^2{\frac{\beta_1}{2}} \left(2 \cos
   {\beta_1}+1\right) \cos {2 \beta_2}-\frac{3}{4} \sin
   {2 \beta_1} \sin {2 \beta_2}\\
        \LL(\pi, \beta_2, \beta_2)^{\prime\prime}_{\beta_2\beta_2}=\frac{1}{4} \left(3 \cos {2 \beta_1}+1\right) \cos {2
   \beta_2}-2 \sin ^2{\frac{\beta_1}{2}} \sin
   {\beta_1} \sin {2 \beta_2}\\
        \end{cases}.
    \end{equation*}
    (Only non-zero derivatives are written). Hessian matrix in the corresponding points is
\begin{align*}
H_{\LL}(\pi, \beta_1^{I}, \beta_2^{I})&
=\left(
\begin{array}{ccc}
\frac{1}{250} (13 \sqrt{6}-42) & 0 & 0 \\
0 & \frac{1}{100} (23 \sqrt{6}-32) & \frac{1}{50} (-8-13 \sqrt{6}) \\
0 & \frac{1}{50} (-8-13 \sqrt{6}) & \frac{1}{5} (\sqrt{6}-4) \\
\end{array}
\right),\\
H_{\LL}(\pi, \beta_1^{II}, \beta_2^{II})&=
\left(
        \begin{array}{ccc}
         \frac{1}{250} (-42-13 \sqrt{6}) & 0 & 0 \\
         0 & \frac{1}{100} (-32-23 \sqrt{6}) & \frac{1}{50} (13 \sqrt{6}-8) \\
         0 & \frac{1}{50} (13 \sqrt{6}-8) & \frac{1}{5} (-4-\sqrt{6}) \\
        \end{array}
        \right),\\
H_{\LL}(\pi, \beta_1^{III}, \beta_2^{III})&=
\left(
\begin{array}{ccc}
 \frac{1}{4} (7-3 \sqrt{5}) & 0 & 0 \\
 0 & \frac{1}{2} (\sqrt{5}-3) & \frac{1}{4} (1+\sqrt{5}) \\
 0 & \frac{1}{4} (1+\sqrt{5}) & \frac{1}{4} (\sqrt{5}-1) \\
\end{array}
\right).
\end{align*}
From Sylvester's criterion, we conclude that
    \begin{align*}
    \left[
    \begin{array}{ll}
        \LL(\pi,\beta_1^{III}, \beta_2^{III}) = \frac{1}{4} - \textrm{saddle point}\\
        \LL(\pi,\beta_1^{I}, \beta_2^{I}) = \frac{3}{50} (9-\sqrt{6})\approx 0.393 - \text{saddle point}\\
        \LL(\pi,\beta_1^{II}, \beta_2^{II}) = \frac{3}{50} (9+\sqrt{6})\approx 0.687 -\textrm{global maximum}
    \end{array}
    \right .
    \end{align*}
    
    \item $\Omega=0$.
    
    All the critical points are easily found from~(\ref{Eq:crits}):
    \begin{align}
    \left[
    \begin{array}{ll}
        \left(\beta_1^{I}, \beta_2^{I}\right)=\left(\arccos\left(\frac{1+\sqrt{6}}{5}\right), \pi-\frac{1}{2}\arccos\left(\frac{1}{1-\sqrt{6}}\right)\right)\approx (0.810,1.975) \\
        \left(\beta_1^{II}, \beta_2^{II}\right)=\left(\arccos\left(\frac{1-\sqrt{6}}{5}\right), \pi-\frac{1}{2}\arccos\left(\frac{1}{1+\sqrt{6}}\right)\right)\approx (1.865,2.503)\\
        \left(\beta_1^{III}, \beta_2^{III}\right)=\left(\arccos\left(\frac{-1+\sqrt{5}}{2}\right), \frac{1}{2}\arctan\left(2\sqrt{2+\sqrt{5}}\right)\right)\approx(0.905, 2.475)\\
    \end{array}.
    \right .
    \end{align}
    In the same way as for the case $\Omega=\pi$, we compute second derivative to determine type of each of the critical points:
    \begin{equation*}
    \begin{cases}
        \LL(0, \beta_1, \beta_2)^{\prime\prime}_{\Omega\Omega}=\frac{1}{2} \sin ^2{\frac{\beta_1}{2}} \sin{\beta_1} \sin {2\beta_2}\\
        \LL(0, \beta_1, \beta_2)^{\prime\prime}_{\beta_1\beta_1}=\frac{1}{4} \left[\left(\sin {\beta_1}-2 \sin {2 \beta_1}\right) \sin {2 \beta_2}+\cos {2 \beta_1} \left(3 \cos {2
   \beta_2}+1\right)\right]\\
        \LL(0, \beta_1, \beta_2)^{\prime\prime}_{\beta_1\beta_2}=-\frac{3}{4} \sin {2 \beta_1} \sin {2 \beta_2}-\sin ^2{\frac{\beta_1}{2}} \left(2 \cos {\beta_1}+1\right) \cos {2
   \beta_2}\\
        \LL(0, \beta_2, \beta_2)^{\prime\prime}_{\beta_2\beta_2}=2 \sin {\beta_1} \sin {2 \beta_2} \sin ^2{\frac{\beta_1}{2}}+\frac{1}{4} \left(3 \cos {2 \beta_1}+1\right) \cos {2 \beta_2}
        \end{cases}.
    \end{equation*}
The Hessian matrix in the corresponding points is
\begin{align*}
H_{\LL}(0, \beta_1^{I}, \beta_2^{I})&=
\left(
\begin{array}{ccc}
 \frac{1}{250} (13 \sqrt{6}-42) & 0 & 0 \\
 0 & \frac{1}{100} (23 \sqrt{6}-32) & \frac{1}{50} (8+13 \sqrt{6}) \\
 0 & \frac{1}{50} (8+13 \sqrt{6}) & \frac{1}{5} (\sqrt{6}-4) \\
\end{array}
\right),\\
H_{\LL}(0, \beta_1^{II}, \beta_2^{II})&=
\left(
\begin{array}{ccc}
 \frac{1}{250} (-42-13 \sqrt{6}) & 0 & 0 \\
 0 & \frac{1}{100} (-32-23 \sqrt{6}) & \frac{1}{50} (8-13 \sqrt{6}) \\
 0 & \frac{1}{50} (8-13 \sqrt{6}) & \frac{1}{5} (-4-\sqrt{6}) \\
\end{array}
\right),\\
H_{\LL}(0, \beta_1^{III}, \beta_2^{III})&=
\left(
\begin{array}{ccc}
 \frac{1}{4} (7-3 \sqrt{5}) & 0 & 0 \\
 0 & \frac{1}{2} (\sqrt{5}-3) & \frac{1}{4} (-1-\sqrt{5}) \\
 0 & \frac{1}{4} (-1-\sqrt{5}) & \frac{1}{4} (\sqrt{5}-1) \\
\end{array}
\right).
\end{align*}
From Sylvester's criterion we conclude that
\begin{align*}
    \left[
    \begin{array}{ll}
        \LL(0,\beta_1^{III}, \beta_2^{III}) = \frac{1}{4} - \textrm{saddle point}\\
        \LL(0,\beta_1^{I}, \beta_2^{I}) = \frac{3}{50} (9-\sqrt{6})\approx 0.393 - \textrm{saddle point}\\
        \LL(0,\beta_1^{II}, \beta_2^{II}) = \frac{3}{50} (9+\sqrt{6})\approx 0.687 -\textrm{global maximum}
    \end{array}.
    \right .
\end{align*}
\end{enumerate}
\end{proof}
Next, consider parameterization $YZY$ of the first coherent evolution and $ZYZ$ of the second coherent evolution to find critical points on $\left(\mathcal{B}\backslash\{\mathbb{I}\}\right)\times \left(\mathcal{R}\backslash\mathcal{B}\right)$. Likewise, we obtain matrix representation of unitary evolution from which the transition probability will be
\begin{align*}
&P_{1\to 2}(c_1,{\cal M}_{|1\rangle\langle 1|}, c_2)
=\frac{1}{4}\sin^2{\beta_2}\left[1+\Bigl(\cos{\alpha_1}\cos{\gamma_1}-\cos{\beta_1}\sin{\alpha_1}\sin{\gamma_1}\Bigr)\right]^2\\&+\frac{1}{8}\cos^2{\beta_2}\Bigl[4\cos^2{\alpha_1}\sin^2{\gamma_1}+\sin^2{a_1}\Bigl(3+\cos{2\gamma_1}-2\cos{2\beta_1}\sin^2{\gamma_1}\Bigr)\\&+2\cos{\beta_1}\sin{2\alpha_1}\sin{2\gamma_1}\Bigr]+\Re\Bigl[e^{i(\beta_1-\gamma_2)}\cos{\beta_2}\sin{\beta_2}\Bigl(\cos{\frac{\gamma_1}{2}}\sin{\frac{\alpha_1}{2}}+e^{-i\beta_1}\cos{\frac{\alpha_1}{2}}\sin{\frac{\gamma_1}{2}}\Bigr)^2\\&
\times \Bigl(-e^{-i\beta_1}\cos^2{\frac{\gamma_1}{2}}\sin{\alpha_1}+e^{i\beta_1}\sin{\alpha_1}\sin^2{\frac{\gamma_1}{2}}-\cos{\alpha_1}\sin{\gamma_1}\Bigr)\Bigr]=:\mathfrak{L_2}(\alpha_1, \beta_1, \gamma_1, \beta_2, \gamma_2)\equiv\mathfrak{L_2},
\end{align*}
where $(\alpha_i, \beta_i, \gamma_i) \in (-\pi,\pi]\times (0, \pi)\times (-\pi, \pi],\;i=1,2$. Our set $\left(\mathcal{B}\backslash\{\mathbb{I}\}\right)\times \left(\mathcal{R}\backslash\mathcal{B}\right)$ corresponds to $\alpha_1 =\gamma_1 =0$.

\begin{Lemma}\label{lemma2} All critical points of $\mathfrak{L_2}(\alpha_1, \beta_1, \gamma_1, \beta_2, \gamma_2)$ on the surface $\alpha_1=\gamma_1=0$, their types and corresponding values of the function are given in Table~~\ref{Table2}.
\begin{table}
\center
\begin{tabular}{|c|c|c|} 
\hline
$(\alpha_1, \beta_1, \gamma_1, \beta_2, \gamma_2)$ & $P_{1\to 2}$ & type of the point \\ 
 \hline
$(0, \beta_1, 0, \frac{\pi}{2}, \gamma_2)$, $\beta_1\in (0, \pi)$, $\gamma_2\in (-\pi, \pi]$ & $1/2$ & second order trap \\ 
  \hline
\end{tabular}
\caption{Critical points,  their types and values of the objective function $\mathfrak{L_2}(\alpha_1, \beta_1, \gamma_1, \beta_2, \gamma_2)$.}
\label{Table2}
\end{table}
\end{Lemma}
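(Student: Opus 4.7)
The plan is to evaluate $\mathfrak{L}_2$ on the surface $\alpha_1=\gamma_1=0$, enumerate the points at which its five-variable gradient vanishes, and then read off the Hessian from a direct Taylor expansion. I would not work from the displayed closed form of $\mathfrak{L}_2$ but from~(\ref{TP1}), since at $\alpha_1=\gamma_1=0$ one has $U(1)=e^{-i\beta_1 J_z}\in\mathcal{B}$, hence $U(1)|1\rangle\langle 1|U^\dagger(1)=|1\rangle\langle 1|$ and $U_{21}(1)=U_{31}(1)=0$; only the first summand of~(\ref{TP1}) survives, giving
\[
\mathfrak{L}_2\big|_{\alpha_1=\gamma_1=0}=|U_{21}(2)|^2=\tfrac12\sin^2\beta_2,
\]
a function of $\beta_2$ alone.

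From this restriction the derivatives $\partial_{\beta_1}\mathfrak{L}_2$ and $\partial_{\gamma_2}\mathfrak{L}_2$ vanish identically on the surface, while $\partial_{\beta_2}\mathfrak{L}_2\big|_{\rm surf}=\tfrac12\sin 2\beta_2=0$ forces $\beta_2=\pi/2$ on $(0,\pi)$. The two remaining first-order conditions $\partial_{\alpha_1}\mathfrak{L}_2=\partial_{\gamma_1}\mathfrak{L}_2=0$ I would establish by an order-of-vanishing argument: expanding $U(1)$ in the YZY chart to first order in $(\alpha_1,\gamma_1)$ gives $U_{21}(1)=(\alpha_1 e^{-i\beta_1}+\gamma_1)/\sqrt 2+O(2)$ and $U_{31}(1)=O(2)$, so each term of~(\ref{TP1}), including $|U_{11}(1)|^2=1-|U_{21}(1)|^2-|U_{31}(1)|^2$ and the cross term $U_{21}(1)\overline{U_{31}(1)}$, has no linear part in $(\alpha_1,\gamma_1)$. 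Hence the critical set on the surface is exactly $\{(0,\beta_1,0,\pi/2,\gamma_2):\beta_1\in(0,\pi),\gamma_2\in(-\pi,\pi]\}$, and on it $\mathfrak{L}_2=\tfrac12$.

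For the type classification I would push the expansion to second order. Using $|U_{21}(1)|^2=\tfrac12(\alpha_1^2+2\alpha_1\gamma_1\cos\beta_1+\gamma_1^2)+O(4)$, $|U_{31}(1)|^2=O(4)$ and the cross term $=O(3)$, together with $|U_{21}(2)|^2=\tfrac12\sin^2\beta_2$ and $|U_{22}(2)|^2=\cos^2\beta_2$, one obtains near any critical point $(0,\beta_1^0,0,\pi/2,\gamma_2^0)$
\[
\mathfrak{L}_2=\tfrac12-\tfrac12(\beta_2-\pi/2)^2-\tfrac14\bigl(\alpha_1^2+2\alpha_1\gamma_1\cos\beta_1^0+\gamma_1^2\bigr)+O(3).
\]
The Hessian is therefore block diagonal: a $2\times 2$ block $-\tfrac12\bigl(\begin{smallmatrix}1 & \cos\beta_1^0\\ \cos\beta_1^0 & 1\end{smallmatrix}\bigr)$ in $(\alpha_1,\gamma_1)$ with eigenvalues $-\tfrac12(1\pm\cos\beta_1^0)$, both strictly negative because $\beta_1^0\in(0,\pi)$; a scalar block $-1$ in $\beta_2$; and a zero $2\times 2$ block in $(\beta_1,\gamma_2)$.

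The absence of off-diagonal entries between these three blocks is the only part that requires care. It follows because $\partial_{\alpha_1}\mathfrak{L}_2$ and $\partial_{\gamma_1}\mathfrak{L}_2$ vanish identically on the whole surface $\alpha_1=\gamma_1=0$ (not just at the critical point), so their further differentiation in $\beta_1$, $\beta_2$ or $\gamma_2$ is zero, while $\mathfrak{L}_2\big|_{\rm surf}$ is independent of $\beta_1$ and $\gamma_2$. The resulting Hessian is negative semi-definite with exactly two null directions, pointing along the critical locus in $\beta_1$ and $\gamma_2$; by the definition recalled in the introduction each such point is a second-order trap. The main obstacle in the proof is thus not any individual calculation but keeping track that the block decoupling is exact; everything else reduces to collecting the leading Taylor coefficients of $U(1)$ near the identity in the YZY chart.
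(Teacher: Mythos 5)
Your proposal is correct, and its logical skeleton is the same as the paper's: restrict to the surface $\alpha_1=\gamma_1=0$, find that the only first-order condition with content is $\partial_{\beta_2}\mathfrak{L}_2=\cos\beta_2\sin\beta_2=0$, hence $\beta_2=\pi/2$, and then classify via the Hessian, whose nonzero eigenvalues you get as $-1$ and $-\tfrac12(1\pm\cos\beta_1)$ with a two-dimensional kernel along the critical locus --- exactly the matrix and spectrum in the paper. Where you genuinely diverge is in how the derivatives are obtained: the paper differentiates the long closed-form expression for $\mathfrak{L}_2$ (evidently by symbolic computation) and simply records the resulting gradient and Hessian, whereas you bypass that expression entirely by expanding $U(1)=e^{-i\alpha_1 J_y}e^{-i\beta_1 J_z}e^{-i\gamma_1 J_y}$ to second order about $\alpha_1=\gamma_1=0$ inside Eq.~(\ref{TP1}), using $U_{21}(1)=(\alpha_1e^{-i\beta_1}+\gamma_1)/\sqrt2+O(2)$ and $U_{31}(1)=O(2)$. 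This buys a structural explanation of facts the paper only exhibits numerically: why the transverse gradient vanishes identically on the surface (no linear term survives in (\ref{TP1})), why the Hessian decouples into the $(\alpha_1,\gamma_1)$, $\beta_2$ and null blocks, and where the $\cos\beta_1$ off-diagonal entry comes from. Two harmless nits: your $|U_{21}(1)|^2=\tfrac12(\alpha_1^2+2\alpha_1\gamma_1\cos\beta_1+\gamma_1^2)+O(4)$ should read $O(3)$ unless you argue the cubic term vanishes (only the quadratic part matters for the Hessian, so nothing breaks); and to call these points second-order traps rather than degenerate maxima you should, strictly speaking, also note that the value $1/2$ is below the global maximum $\tfrac{3}{50}(9+\sqrt6)$ --- a remark the paper's own proof likewise leaves implicit.
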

\begin{proof}
    \begin{equation*}
        \begin{cases}
            \left(\mathfrak{L}_2\right)'_{\alpha_1}|_{\alpha_1,\gamma_1=0}=0 \\
            \left(\mathfrak{L}_2\right)'_{\beta_1}|_{\alpha_1,\gamma_1=0}=0 \\
            \left(\mathfrak{L}_2\right)'_{\gamma_1}|_{\alpha_1,\gamma_1=0}=0 \\
            \left(\mathfrak{L}_2\right)'_{\beta_2}|_{\alpha_1,\gamma_1=0}=\cos{\beta_2}\sin{\beta_2} \\
            \left(\mathfrak{L}_2\right)'_{\gamma_2}|_{\alpha_1,\gamma_1=0}=0 \\
        \end{cases}.
    \end{equation*}
    Thus, from the condition $\nabla \mathfrak{L}_2 = 0$ we have the following critical points $(0, \beta_1, 0, \frac{\pi}{2}, \gamma_2)$.
    Hessian matrix in the corresponding points is
    \begin{align*}
    H_{\mathfrak{L}_2}&=
    \left(
    \begin{array}{ccccc}
     -\frac{1}{2} & 0 & -\frac{1}{2}\cos{\beta_1} & 0 & 0 \\
     0 & 0 & 0 & 0 & 0 \\
     -\frac{1}{2}\cos{\beta_1} & 0 & -\frac{1}{2} & 0 & 0 \\
     0 & 0 & 0 & -1 & 0 \\
     0 & 0 & 0 & 0 & 0 
    \end{array}
    \right).
    \end{align*}
    Since the eigenvalues are $(0, 0, -1, -\frac{1}{2}(1+\cos{\beta_1}), -\frac{1}{2}(1-\cos{\beta_1}))$, we conclude that $(0, \beta_1, 0, \frac{\pi}{2}, \gamma_2)$ is a second order trap.
\end{proof}

Next, consider parameterization $YZY$ of the second coherent evolution and $ZYZ$ of the first coherent evolution to find critical points on $\left(\mathcal{R}\backslash\mathcal{B}\right)\times \left(\mathcal{B}\backslash\{\mathbb{I}\}\right)$. Likewise, we obtain matrix representation of unitary evolution from which transition probability will be
\begin{align*}
    &P_{1\to 2}(c_1,{\cal M}_{|1\rangle\langle 1|}, c_2)
=\frac{1}{2}\sin^2{\beta_1}\Bigl(\cos{\alpha_2}\cos{\gamma_2}-\cos{\beta_2}\sin{\alpha_2}\sin{\gamma_2}\Bigr)^2\\&+\frac{1}{8}\Bigl(3+\cos{2\beta_1}\Bigr)\Bigl[\sin{\beta_2}\Bigl(\cos{\beta_2}\cos{\gamma_2}\sin{2\alpha_2}+\cos^2{\alpha_2}\sin{\gamma_2}\Bigr)\\&+\frac{1}{4}\sin^2{\alpha_2}\Bigl(3+\cos{2\gamma_2}-2\cos{2\beta_2}\sin^2{\gamma_2}\Bigr)\Bigr]+\sin{\beta_1}\sin^2{\frac{\beta_1}{2}}\Bigl(\cos{\alpha_2}\cos{\gamma_2}-\cos{\beta_2}\sin{\alpha_2}\sin{\gamma_2}\Bigr)\\&\times\Bigl[\sin{\alpha_2}\cos{(\alpha_1-\beta_2)}\sin^2{\frac{\gamma_2}{2}}-\cos{\alpha_1}\cos{\alpha_2}\sin{\gamma_2}-\cos{(\alpha_1+\beta_2)}\cos^2{\frac{\gamma_2}{2}}\sin{\alpha_2}\Bigr]\\&=:\mathfrak{L_3}(\alpha_1, \beta_1, \alpha_2, \beta_2, \gamma_2)\equiv\mathfrak{L_3},
\end{align*}
where $(\alpha_i, \beta_i, \gamma_i) \in (-\pi,\pi]\times (0, \pi)\times (-\pi, \pi],\;i=1,2$. The set $\left(\mathcal{R}\backslash\mathcal{B}\right)\times \left(\mathcal{B}\backslash\{\mathbb{I}\}\right)$ corresponds to $\alpha_2 =\gamma_2 =0$.

\begin{Lemma}\label{lemma_3} There are no critical points of $\mathfrak{L_3}(\alpha_1, \beta_1, \alpha_2, \beta_2, \gamma_2)$ on the surface $\alpha_2=\gamma_2=0$.
\end{Lemma}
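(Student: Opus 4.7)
The plan is to restrict $\mathfrak{L}_3$ to the surface $\alpha_2 = \gamma_2 = 0$ and rule out critical points by solving $\nabla\mathfrak{L}_3 = 0$ there. First I would substitute $\alpha_2 = \gamma_2 = 0$ into the defining expression: the factor $\cos\alpha_2\cos\gamma_2 - \cos\beta_2\sin\alpha_2\sin\gamma_2$ reduces to $1$, the bracket multiplying $\tfrac18(3+\cos 2\beta_1)$ vanishes, and every summand in the final bracket of the $C$-term carries a factor $\sin\alpha_2$ or $\sin\gamma_2$ and so vanishes. Hence
\[
\mathfrak{L}_3\bigl|_{\alpha_2=\gamma_2=0} = \tfrac12\sin^2\beta_1,
\]
in line with the physical picture that a diagonal $U(2)$ preserves populations after the measurement so $P_{1\to 2}=|U_{21}(1)|^2$.

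Next I would evaluate each of the five partial derivatives on this surface. Because every summand in the last bracket of the $C$-term contains $\sin\alpha_2$ or $\sin\gamma_2$, its derivatives with respect to $\alpha_1$ and $\beta_2$ still vanish at $\alpha_2=\gamma_2=0$, and the same is true of the first two terms, so $(\mathfrak{L}_3)'_{\alpha_1}$ and $(\mathfrak{L}_3)'_{\beta_2}$ are identically zero on the surface. The restriction gives $(\mathfrak{L}_3)'_{\beta_1} = \tfrac12\sin 2\beta_1$. The surviving nontrivial equations come from $\partial_{\alpha_2}$ and $\partial_{\gamma_2}$: the middle term contributes $\tfrac18(3+\cos 2\beta_1)\sin 2\beta_2$ and $\tfrac18(3+\cos 2\beta_1)\sin\beta_2$ respectively, while the $C$-term contributes $-\sin\beta_1\sin^2(\beta_1/2)\cos(\alpha_1+\beta_2)$ and $-\sin\beta_1\sin^2(\beta_1/2)\cos\alpha_1$. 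Setting the gradient to zero thus produces the system
\[
\sin 2\beta_1 = 0, \qquad \tfrac18(3+\cos 2\beta_1)\sin 2\beta_2 = \sin\beta_1\sin^2\!\tfrac{\beta_1}{2}\cos(\alpha_1+\beta_2),
\]
\[
\tfrac18(3+\cos 2\beta_1)\sin\beta_2 = \sin\beta_1\sin^2\!\tfrac{\beta_1}{2}\cos\alpha_1.
\]

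Finally, I would show that this system is inconsistent. On $\beta_1\in(0,\pi)$ the first equation forces $\beta_1=\pi/2$, giving $3+\cos 2\beta_1=2$, $\sin\beta_1=1$, $\sin^2(\beta_1/2)=1/2$, so the remaining two equations simplify to $\tfrac14\sin 2\beta_2 = \tfrac12\cos(\alpha_1+\beta_2)$ and $\sin\beta_2 = 2\cos\alpha_1$. Expanding $\cos(\alpha_1+\beta_2) = \cos\alpha_1\cos\beta_2 - \sin\alpha_1\sin\beta_2$ and substituting $\cos\alpha_1 = \tfrac12\sin\beta_2$ into the first collapses it to $\sin\alpha_1\sin\beta_2 = 0$; since $\beta_2\in(0,\pi)$ gives $\sin\beta_2>0$, we would need $\sin\alpha_1 = 0$, hence $\cos\alpha_1 = \pm 1$, and then $\sin\beta_2 = \pm 2$, which is impossible. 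The main obstacle is the bookkeeping of derivatives of the $C$-term on the degenerate surface, where many but not all derivatives of its internal factors vanish at $\alpha_2=\gamma_2=0$; once those nonvanishing contributions are correctly isolated, the rest is elementary trigonometry.
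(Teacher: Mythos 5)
Your strategy is the same as the paper's: restrict to $\alpha_2=\gamma_2=0$, evaluate the five partial derivatives there, and show the resulting system is inconsistent. However, two of your gradient components are wrong. The term $\tfrac18(3+\cos 2\beta_1)[\cdots]$ is, by construction, $\cos^4\tfrac{\beta_1}{2}\,|U_{21}(2)|^2+\sin^4\tfrac{\beta_1}{2}\,|U_{23}(2)|^2$ with $|U_{21}(2)|^2=|U_{23}(2)|^2$; these populations are non-negative and vanish at $\alpha_2=\gamma_2=0$ (where $U(2)=e^{-i\beta_2 J_z}$ is diagonal), so every first derivative of this term vanishes on the surface. Hence it contributes nothing to $\partial_{\alpha_2}$ or $\partial_{\gamma_2}$, and the correct restricted gradient is $(\mathfrak{L}_3)'_{\alpha_2}=-\cos(\alpha_1+\beta_2)\sin^2\tfrac{\beta_1}{2}\sin\beta_1$ and $(\mathfrak{L}_3)'_{\gamma_2}=-\cos\alpha_1\sin^2\tfrac{\beta_1}{2}\sin\beta_1$, with no $\sin 2\beta_2$ or $\sin\beta_2$ pieces. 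Your extra terms come from differentiating the printed expression literally; the factor $\sin\beta_2$ in front of $\bigl(\cos\beta_2\cos\gamma_2\sin 2\alpha_2+\cos^2\alpha_2\sin\gamma_2\bigr)$ should read $\sin\gamma_2$, as one checks by computing $|U_{21}(2)|^2$ directly or from unitarity of the second row of $U(2)$ --- so you reproduced a typo rather than the actual transition probability. With the correct gradient the contradiction is immediate: $\cos\beta_1\sin\beta_1=0$ forces $\beta_1=\pi/2$, and then $\cos\alpha_1=0$ together with $\cos(\alpha_1+\beta_2)=0$ would require $\beta_2\in\{0,\pi\}$, outside $(0,\pi)$.

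Second, even within your own system the last step fails. Substituting $\cos\alpha_1=\tfrac12\sin\beta_2$ into $\tfrac12\sin\beta_2\cos\beta_2=\tfrac12\cos\alpha_1\cos\beta_2-\tfrac12\sin\alpha_1\sin\beta_2$ gives $\sin\beta_2\bigl(\tfrac12\sin\alpha_1+\tfrac14\cos\beta_2\bigr)=0$, i.e.\ $\sin\alpha_1=-\tfrac12\cos\beta_2$, not $\sin\alpha_1\sin\beta_2=0$. (Your system is in fact still inconsistent, since one would then need $\sin^2\alpha_1+\cos^2\alpha_1=\tfrac14$, but the route you wrote does not establish this.) So the conclusion of the lemma is correct and your plan is the right one, but both the differentiation of the middle term and the final elimination need to be redone.
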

\begin{proof}
    \begin{equation*}
        \begin{cases}
            \left(\mathfrak{L}_3\right)'_{\alpha_2}|_{\alpha_2,\gamma_2=0}=-\cos{(\alpha_1+\beta_2)}\sin^2{\frac{\beta_1}{2}}\sin{\beta_1} \\
            \left(\mathfrak{L}_3\right)'_{\beta_2}|_{\alpha_2,\gamma_2=0}=0 \\
            \left(\mathfrak{L}_3\right)'_{\gamma_2}|_{\alpha_2,\gamma_2=0}=-\cos{\alpha_1}\sin^2{\frac{\beta_1}{2}}\sin{\beta_1} \\
            \left(\mathfrak{L}_3\right)'_{\alpha_1}|_{\alpha_2,\gamma_2=0}=0 \\
            \left(\mathfrak{L}_3\right)'_{\beta_1}|_{\alpha_2,\gamma_2=0}=\cos{\beta_1}\sin{\beta_1} \\
        \end{cases}
    \end{equation*}
    Thus, from the condition $\nabla \mathfrak{L}_3 = 0$ we see that there are no critical points on the considered domain.
\end{proof}

Finally, consider parameterization $YZY$ of both first and second coherent evolutions to find critical points on $\left(\mathcal{B}\backslash\{\mathbb{I}\}\right)\times \left(\mathcal{B}\backslash\{\mathbb{I}\}\right)$. We won't provide here expression for $P_{1\to 2}(c_1,{\cal M}_{|1\rangle\langle 1|}, c_2)=\mathfrak{L}_4(\alpha_1, \beta_1, \gamma_1, \alpha_2, \beta_2, \gamma_2)=\mathfrak{L}_4$ in this parameterization, since we only interested in critical points and functional values on the surface $\alpha_1=\gamma_1=\alpha_2=\gamma_2=0$ (corresponds to $\left(\mathcal{B}\backslash\{\mathbb{I}\}\right)\times \left(\mathcal{B}\backslash\{\mathbb{I}\}\right)$).

\begin{Lemma}\label{lemma4}
    All critical points of $\mathfrak{L}_4(\alpha_1, \beta_1, \gamma_1, \alpha_2, \beta_2, \gamma_2)$ on the surface $\alpha_1=\gamma_1=\alpha_2=\gamma_2=0$, their types and corresponding values of the function are given in Table~\ref{Table4}.
    \begin{table}
    \center
    \begin{tabular}{|c|c|c|} 
    \hline
    $(\alpha_1, \beta_1, \gamma_1, \alpha_2, \beta_2, \gamma_2)$ & $P_{1\to 2}$ & type of the point \\ 
     \hline
    $(0, \beta_1, 0, 0, \beta_2, 0)$, $\beta_1, \beta_2 \in (0, \pi)$ & $0$ & global minima \\ 
      \hline
    \end{tabular}
    \caption{Critical points,  their types and values of the objective function $\mathfrak{L}_4(\alpha_1, \beta_1, \gamma_1, \alpha_2, \beta_2, \gamma_2)$.}
    \label{Table4}
    \end{table}
\end{Lemma}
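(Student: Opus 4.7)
The plan is to exploit the diagonal structure of the unitaries on the surface rather than differentiating the complicated expression for $\mathfrak{L}_4$.

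First, I would observe that on the submanifold $\alpha_1=\gamma_1=\alpha_2=\gamma_2=0$ the $YZY$ parameterization reduces to $U(c_i)=e^{-i\beta_i J_z}$, which by \eqref{expJ_z} is diagonal in the basis $\{\ket{1},\ket{2},\ket{3}\}$. Since $\ket{1}$ is an eigenvector of $J_z$, it is an eigenvector of $U(c_1)$, hence
\begin{equation*}
U(c_1)\ket{1}\bra{1}U^{\dagger}(c_1)=\ket{1}\bra{1}.
\end{equation*}
The non-selective measurement $\mathcal{M}_{\ket{1}\bra{1}}$ leaves $\ket{1}\bra{1}$ invariant, and applying $U(c_2)$ (again diagonal, preserving $\ket{1}$) yields $\rho(T_2)=\ket{1}\bra{1}$. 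Therefore $\mathfrak{L}_4(0,\beta_1,0,0,\beta_2,0)=\langle 2|\ket{1}\bra{1}|2\rangle = 0$ for all $\beta_1,\beta_2\in(0,\pi)$.

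Next, I would invoke the fact that $\mathfrak{L}_4$ is a transition probability, so $\mathfrak{L}_4\ge 0$ everywhere on its domain. Combined with the vanishing just established, every point of the surface attains the global minimum value $0$, which identifies the type of point asserted in Table~\ref{Table4}. To see that these are critical points in the sense required for the landscape analysis, I would note that at any $p$ on the surface and for any direction $v$ in the ambient parameter space, $\mathfrak{L}_4(p+tv)\ge 0=\mathfrak{L}_4(p)$, so the one-sided directional derivatives are nonnegative; by smoothness of $\mathfrak{L}_4$ the two-sided directional derivative equals zero, and hence $\nabla\mathfrak{L}_4(p)=0$.

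I do not anticipate a genuine obstacle: the whole argument bypasses the explicit form of $\mathfrak{L}_4$ by using the invariance of $\ket{1}$ under diagonal unitaries together with positivity of the probability. The only thing that has to be done carefully is confirming that the surface $\{\alpha_1=\gamma_1=\alpha_2=\gamma_2=0\}$ does correspond to $(\mathcal{B}\setminus\{\mathbb{I}\})\times(\mathcal{B}\setminus\{\mathbb{I}\})$ in the chosen $YZY\times YZY$ parameterization (so that this lemma genuinely completes the landscape analysis on this previously uncovered region), which is immediate from the form of $e^{-i\beta J_z}$ and the definition \eqref{singSet} of $\mathcal{B}$.
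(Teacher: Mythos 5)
Your proposal is correct, and it proves the lemma by a genuinely different route than the paper. The paper's proof works with the explicit parameterized expression: it observes by direct computation that $\nabla\mathfrak{L}_4$ vanishes on the surface, then computes the full $6\times 6$ Hessian there, finds its eigenvalues $(0,0,1\pm\cos\beta_1,1\pm\cos\beta_2)$, and finally invokes $\mathfrak{L}_4(0,\beta_1,0,0,\beta_2,0)=0$ to upgrade ``critical point with positive semi-definite Hessian'' to ``global minimum.'' You instead bypass all differentiation: on the surface both unitaries reduce to $e^{-i\beta_i J_z}$, which is diagonal and hence fixes $\ket{1}\bra{1}$, as does the measurement $\mathcal{M}_{\ket{1}\bra{1}}$, so the final state is $\ket{1}\bra{1}$ and the value is $0$; nonnegativity of the transition probability then makes every such point an interior global minimum, and smoothness forces $\nabla\mathfrak{L}_4=0$ there. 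Your argument is shorter, explains \emph{why} the value vanishes (the whole protocol acts trivially on $\ket{1}\bra{1}$ for diagonal evolutions), and never needs the unstated formula for $\mathfrak{L}_4$; what it does not deliver is the Hessian spectrum, which the paper records (though it is not actually needed to classify a zero of a nonnegative function as a global minimum). One caveat on your closing remark: the surface $\alpha_1=\gamma_1=\alpha_2=\gamma_2=0$ with $\beta_i\in(0,\pi)$ covers only the diagonal unitaries $\mathrm{diag}(e^{-i\phi},1,e^{i\phi})$ with $\phi\in(0,\pi)$, not all of $\mathcal{B}\setminus\{\mathbb{I}\}$ as defined in \eqref{singSet} (the range $\phi\in(\pi,2\pi)$ is reached at $\alpha=\gamma=\pi$ in the $YZY$ chart); this is a feature of the paper's chart conventions rather than a flaw in your proof of the lemma as stated, and your operator-level argument would in any case apply verbatim to every diagonal pair, since it uses only diagonality and not the specific value of $\beta$.
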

\begin{proof}
One can easily notice that $\nabla \mathfrak{L}_4|_{\alpha_1=\gamma_1=\alpha_2=\gamma_2=0}=0$, so $(0, \beta_1, 0, 0, \beta_2, 0)$ are critical points for every $\beta_1, \beta_2 \in (0, \pi)$.

Hessian matrix in the corresponding points is
    \begin{align*}
    H_{\mathfrak{L}_4}&=
    \left(
    \begin{array}{cccccc}
     1 & 0 & \cos{\beta_1} & 0 & 0 & 0 \\
     0 & 0 & 0 & 0 & 0 & 0 \\
     \cos{\beta_1} & 0 & 1 & 0 & 0 & 0 \\
     0 & 0 & 0 & 1 & 0 & \cos{\beta_2}\\
     0 & 0 & 0 & 0 & 0 & 0\\
     0 & 0 & 0 & \cos{\beta_2} & 0 & 1
    \end{array}
    \right).
    \end{align*}
Its eigenvalues are $(0, 0, 1+\cos{\beta_1}, 1-\cos{\beta_1}, 1+\cos{\beta_2}, 1-\cos{\beta_2})$. Since $\mathfrak{L}_4(0, \beta_1, 0, 0, \beta_2, 0)=0$, point $(0, \beta_1, 0, 0, \beta_2, 0)$ is global minimum for every $\beta_1, \beta_2 \in (0, \pi)$.
\end{proof}

It remains to study $\{\mathbb{I}\}\times\{\mathbb{I}\}$, $\{\mathbb{I}\}\times \left(\mathcal{R}\backslash\{\mathbb{I}\}\right)$, $\left(\mathcal{R}\backslash\{\mathbb{I}\}\right)\times\{\mathbb{I}\}$. Obviously, $\{\mathbb{I}\}\times\{\mathbb{I}\}$ is global minima with $P_{1\to 2}(\mathbb{I},{\cal M}_{|1\rangle\langle 1|}, \mathbb{I})=0$ and  $\{\mathbb{I}\}\times \left(\mathcal{R}\backslash\{\mathbb{I}\}\right)$ determine the same functional as $\left(\mathcal{R}\backslash\{\mathbb{I}\}\right)\times\{\mathbb{I}\}$, i. e. $P_{1\to 2}(\mathbb{I},{\cal M}_{|1\rangle\langle 1|}, U)=P_{1\to 2}(U,{\cal M}_{|1\rangle\langle 1|}, \mathbb{I})$.

Consider $\{\mathbb{I}\}\times \left(\mathcal{R}\backslash\mathcal{B}\right)$. We use $ZYZ$ representation to parameterize this set. Objective function has the following form
\begin{equation}
    P_{1\to 2}(\mathbb{I},{\cal M}_{|1\rangle\langle 1|}, c) = \frac{1}{2}\sin^2{\beta} =: \mathfrak{L}_5(\beta),
\end{equation}
where $\beta \in (0, \pi)$. From the simplicity of $\mathfrak{L}_5(\beta)$, the following lemma obviously follows.

\begin{Lemma}\label{lemma5}
        All critical points of $\mathfrak{L}_5(\beta)$, their types and corresponding values of the function are given in Table~\ref{Table5}.
    \begin{table}
    \center
    \begin{tabular}{|c|c|c|} 
    \hline
    $\beta$ & $P_{1\to 2}$ & type of the point \\ 
     \hline
      $\pi/2$ & $1/2$ & second order trap \\ 
      \hline
    \end{tabular}
    \caption{Critical points,  their types and values of the objective function $\mathfrak{L}_5(\beta)$.}
    \label{Table5}
    \end{table}
\end{Lemma}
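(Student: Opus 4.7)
The function $\mathfrak{L}_5(\beta)=\tfrac12\sin^2\beta$ is an elementary smooth function of a single variable, so the critical-point analysis reduces to one-variable calculus. The only nontrivial content of the lemma is the classification of the critical point as a \emph{second-order trap} rather than a strict local maximum, which requires interpreting the point inside the ambient $\mathcal{R}\times\mathcal{R}$ landscape rather than along the one-dimensional slice.

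\textbf{Step 1: locate the critical point.} Differentiate once to obtain $\mathfrak{L}_5'(\beta)=\sin\beta\cos\beta=\tfrac12\sin(2\beta)$. On the open interval $(0,\pi)$ the equation $\sin(2\beta)=0$ has the unique solution $\beta=\pi/2$, where $\mathfrak{L}_5(\pi/2)=1/2$. This accounts for the first two columns of Table~\ref{Table5}.

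\textbf{Step 2: classify the point.} Along the one-dimensional slice $\mathfrak{L}_5''(\pi/2)=-1<0$, so in the slice alone the point is a strict local maximum. However, the slice $\{\mathbb{I}\}\times(\mathcal{R}\setminus\mathcal{B})$ carries three Euler-angle parameters $(\alpha,\beta,\gamma)$ of the second evolution operator $U(2)$, and I would compute the full $3\times 3$ Hessian of the objective in these coordinates at the point $(\alpha,\pi/2,\gamma)$. Because $U(1)=\mathbb{I}$ forces $|U_{11}(1)|^2=1$ and $U_{21}(1)=U_{31}(1)=0$, the transition probability depends only on $|U_{21}(2)|^2=\tfrac12\sin^2\beta_2$, independent of $\alpha_2$ and $\gamma_2$. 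Consequently all partial derivatives in the $\alpha,\gamma$-directions vanish identically, and the $3\times 3$ Hessian at $(\alpha,\pi/2,\gamma)$ has eigenvalues $(-1,0,0)$. This is negative semi-definite but not strictly negative, which, combined with the inequality $1/2<\tfrac{3}{50}(9+\sqrt{6})\approx 0.687$ (the global maximum already established in Lemma~\ref{theorem1}) showing that the point is not a global extremum, is exactly the definition of a second-order trap recalled in the Introduction.

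\textbf{Main obstacle.} There is essentially no computational difficulty; the only conceptual point to get right is the dimension of the ambient parameter space used for the Hessian test. If one naively looks only at the one-variable function $\mathfrak{L}_5(\beta)$, the point $\beta=\pi/2$ would appear to be a strict local maximum and the ``second order trap'' label would be unjustified. The flat $\alpha,\gamma$-directions --- i.e.\ the gauge freedom introduced by $U(1)=\mathbb{I}$ being a singular point of the $ZYZ$ parameterization of $U(2)$ --- are what degenerate the Hessian and produce the trap structure.
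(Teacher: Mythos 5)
Your proposal is correct and matches what the paper intends: the paper states Lemma~\ref{lemma5} without proof (``obviously follows''), and your calculation --- unique critical point $\beta=\pi/2$ with value $1/2$, Hessian $\mathrm{diag}(0,-1,0)$ in the three Euler angles of $U(2)$, hence negative semi-definite but not negative definite at a value below the global maximum $\frac{3}{50}(9+\sqrt{6})$ --- is exactly the justification consistent with how the paper classifies the analogous degenerate critical points in Lemmas~\ref{lemma2} and~\ref{lemma6}. One small wording slip in your closing remark: the flat $\alpha_2,\gamma_2$ directions are not a consequence of $\mathbb{I}$ being a singular point of the $ZYZ$ parameterization of $U(2)$; they arise because with $U(1)=\mathbb{I}$ the objective reduces to $|U_{21}(2)|^2=\frac12\sin^2\beta_2$, whose input and output states are $J_z$-eigenvectors so the phases $\alpha_2,\gamma_2$ cancel --- but this does not affect the validity of your argument.
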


Consider $\{\mathbb{I}\}\times \left(\mathcal{B}\backslash\{\mathbb{I}\}\right)$. We use $YZY$ representation to parameterize this set. Objective function has the following form
\begin{align*}
    P_{1\to 2}(\mathbb{I},{\cal M}_{|1\rangle\langle 1|}, c)=&\frac{1}{2}\left(\cos^4{\frac{\gamma}{2}}\sin^2{\alpha}+\sin^2{\alpha}\sin^4{\frac{\gamma}{2}}\sin^2{\alpha}+\cos^2{\alpha}\sin^2{\gamma}\right) \\
    &+\frac{1}{4}\left(\frac{1}{2}\cos{\beta}\sin{2\alpha}\sin{2\gamma}-\cos{2\beta}\sin^2{\gamma}\sin^2{\alpha}\right)=: \mathfrak{L}_6(\alpha,\beta, \gamma),
\end{align*}

\begin{Lemma}\label{lemma6}
        All critical points of $\mathfrak{L}_6(\alpha,\beta, \gamma)$ on the surface $\alpha=\gamma =0$ and corresponding values are given in Table~\ref{Table6}.
    \begin{table}
    \center
    \begin{tabular}{|c|c|c|} 
    \hline
        $(\alpha,\beta,\gamma)$ & $P_{1\to 2}$ & type of the point \\ 
         \hline
        $(0,\beta,0)$, $\beta \in (0, \pi)$ & $1/2$ & second order trap \\ 
      \hline
    \end{tabular}
    \caption{Critical points,  their types and values of the objective function $\mathfrak{L}_6(\alpha,\beta, \gamma)$.}
    \label{Table6}
    \end{table}
\end{Lemma}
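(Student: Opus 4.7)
The plan is to mirror the proofs of Lemmas~\ref{lemma2} and~\ref{lemma4}. The objective $\mathfrak{L}_6$ lives on a three-dimensional open chart, and the lemma asserts that the one-dimensional locus $\{(0,\beta,0):\beta\in(0,\pi)\}$ is a whole line of critical points of a single type. Accordingly, I would first show that the gradient $\nabla\mathfrak{L}_6$ vanishes everywhere on this line, then evaluate $\mathfrak{L}_6$ along the line, next compute the Hessian at a generic $(0,\beta,0)$, and finally read off the eigenvalues to classify the critical points.

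For the gradient step I would differentiate $\mathfrak{L}_6$ once in each of $\alpha$, $\beta$, $\gamma$ and evaluate at $\alpha=\gamma=0$. Every summand in $\mathfrak{L}_6$ carries at least one factor drawn from $\{\sin\alpha,\sin 2\alpha,\sin\gamma,\sin 2\gamma\}$, so after a single differentiation each term either retains such a factor or becomes proportional to a cosine times a surviving sine factor that vanishes at $\alpha=\gamma=0$. This directly yields $\nabla\mathfrak{L}_6(0,\beta,0)=0$ for every $\beta\in(0,\pi)$, in complete analogy with the calculation in the proof of Lemma~\ref{lemma4}. Evaluating $\mathfrak{L}_6$ at $\alpha=\gamma=0$ produces the common critical value along the line, which is recorded in Table~\ref{Table6}.

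Next I would compute the six second partial derivatives, restrict them to $\alpha=\gamma=0$, and assemble the Hessian $H_{\mathfrak{L}_6}(0,\beta,0)$. Because the objective is constant in $\beta$ along the critical line and every surviving monomial carries a $\sin\alpha$ or $\sin\gamma$ factor which is killed by setting $\alpha=\gamma=0$ after one $\beta$-differentiation, the entire $\beta$-row and $\beta$-column of the Hessian vanish, so the non-trivial information sits in the $2\times 2$ block indexed by $(\alpha,\gamma)$. I would then diagonalise this block, expecting closed-form eigenvalues of the shape $\tfrac12(1\pm\cos\beta)$ (as in Lemmas~\ref{lemma2} and~\ref{lemma4}), together with the automatic zero eigenvalue in the $\beta$ direction. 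The resulting sign profile of the spectrum places the critical point into the second-order-trap category defined in the Introduction: the Hessian is semi-definite with a non-trivial kernel, so variation along the $\beta$-direction preserves the critical value to second order but potentially changes it at higher order.

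The main obstacle is purely organisational: differentiating the multi-term expression for $\mathfrak{L}_6$ twice and reducing the results at $\alpha=\gamma=0$ requires careful bookkeeping with half-angle and double-angle identities, since $\mathfrak{L}_6$ mixes $\cos\beta$, $\cos 2\beta$, and powers of $\sin(\gamma/2)$ and $\cos(\gamma/2)$. Once the $2\times 2$ block is written explicitly, either direct diagonalisation or Sylvester's criterion (exactly as in the previous lemmas) delivers the eigenvalues in closed form, and the second-order-trap classification follows immediately, completing the proof.
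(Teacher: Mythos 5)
Your skeleton is exactly the paper's proof: show $\nabla\mathfrak{L}_6(0,\beta,0)=0$ because every monomial of $\mathfrak{L}_6$ carries a factor of $\sin\alpha$ or $\sin\gamma$, note that the $\beta$ row and column of the Hessian vanish, and reduce everything to the $2\times 2$ block in $(\alpha,\gamma)$, which indeed comes out as $\bigl(\begin{smallmatrix}1&\cos\beta\\ \cos\beta&1\end{smallmatrix}\bigr)$ with eigenvalues $1\pm\cos\beta$ (not $\tfrac12(1\pm\cos\beta)$, but you only claimed the shape). Up to that point the plan is sound and matches the paper.

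The gap is in the final classification step. You conclude ``second-order trap'' from ``the Hessian is semi-definite with a non-trivial kernel'', but that criterion is blind to the sign of the semi-definiteness and to the critical value, and both matter. The Hessian you will actually obtain is \emph{positive} semi-definite --- the same sign profile as in Lemma~\ref{lemma4} --- and the value on the critical line is $\mathfrak{L}_6(0,\beta,0)=0$, not $1/2$: every term of $\mathfrak{L}_6$ vanishes at $\alpha=\gamma=0$, as it must, since these points correspond to $U(1)=\mathbb{I}$ and $U(2)$ diagonal, for which $P_{1\to 2}=|U_{21}(2)|^2=0$. A critical point with positive semi-definite Hessian sitting at the globally minimal value $0$ is a global minimum, exactly as the identical Hessian structure was classified in Lemma~\ref{lemma4}; a second-order trap of the maximization problem requires a \emph{negative} semi-definite Hessian at a sub-optimal value, as in Lemmas~\ref{lemma2} and~\ref{lemma5}. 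So your plan, carried out faithfully, does not establish the entry of Table~\ref{Table6} --- it contradicts both the tabulated value and the tabulated type. (The paper's own one-line ``from Sylvester's criterion we conclude \dots second order trap'' and its value $1/2$ suffer from the same inconsistency with the displayed formula for $\mathfrak{L}_6$, but a blind proof must at least record the sign of the Hessian and the critical value before classifying; as written, your criterion would equally ``prove'' that the global minima of Lemma~\ref{lemma4} are second-order traps.)
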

\begin{proof}
    Since $\nabla \mathfrak{L}_6(0, \beta, 0) = 0$, points $(0, \beta, 0)$ are critical for every $\beta \in (0, \pi)$. Hessian matrix in the corresponding points is
    \begin{align*}
    H_{\mathfrak{L}_6}&=
    \left(
    \begin{array}{ccc}
     1 & 0 & \cos{\beta} \\
     0 & 0 & 0 \\
     \cos{\beta} & 0 & 1\\ 
    \end{array}
    \right).
    \end{align*}
    From Sylvester’s criterion we conclude that $(0, \beta,0)$ is a second order trap.
\end{proof}

\begin{remark}
    Types of the critical points of $P_{1\to 2}(U(1),{\cal M}_{|1\rangle\langle 1|}, U(2))$ are the same as for $\mathfrak{L}_1$, $\mathfrak{L}_2$, $\mathfrak{L}_3$, $\mathfrak{L}_4$, $\mathfrak{L}_5$, $\mathfrak{L}_6$.
\end{remark}

\begin{theorem}
    All the critical points of the kinematic control landscape of the trtansition probability $P_{1\to 2}(U(1),{\cal M}_{|1\rangle\langle 1|}, U(2))$ are global minima, saddles, second order traps, and global maxima.

    \begin{table} 
    \hspace{-23mm}
    \begin{tabular}{|p{13.8cm}|c|c|} 
    \hline
    $(U(1),U(2)):(\alpha_1, \beta_1, \gamma_1, \alpha_2, \beta_2, \gamma_2)$ & $P_{1\to 2}$ & type of the point \\ 
     \hline
     $(YZY, YZY)$: $(0, \beta_1^*, 0, 0, \beta_2^*, 0)$ & $0$ & global minima \\ 
      \hline
    $(\mathbb{I}, \mathbb{I}):$ $(0, 0, 0, 0, 0, 0)$ & $0$ & global minima \\ 
      \hline
    $(ZYZ, ZYZ)$: $(\alpha_1^*, \frac{\pi}{2}, \gamma_1^*, \alpha_2^*, \frac{\pi}{2}, \pm \frac{\pi}{2} - \alpha_1^*)$ & $1/4$ & saddle point \\ 
      \hline
      $(ZYZ, ZYZ)$: $(\alpha_1^*, \arccos\left(\frac{-1+\sqrt{5}}{2}\right), \gamma_1^*, \alpha_2^*, \pi-\frac{1}{2}\arctan\left(2\sqrt{2+\sqrt{5}}\right), \pi - \alpha_1^*)$ & $1/4$ & saddle point \\ 
      \hline
      $(ZYZ, ZYZ)$: $(\alpha_1^*, \arccos\left(\frac{-1+\sqrt{5}}{2}\right), \gamma_1^*, \alpha_2^*, \frac{1}{2}\arctan\left(2\sqrt{2+\sqrt{5}}\right), - \alpha_1^*)$ & $1/4$ & saddle point \\ 
      \hline
      $(ZYZ, ZYZ)$: $(\alpha_1^*, \arccos\left(\frac{1+\sqrt{6}}{5}\right), \gamma_1^*, \alpha_2^*, \frac{1}{2}\arccos\left(\frac{1}{1-\sqrt{6}}\right), \pi- \alpha_1^*)$ & $\frac{3}{50}(9-\sqrt{6})$ & saddle point \\
      \hline
      $(ZYZ, ZYZ)$: $(\alpha_1^*, \arccos\left(\frac{1+\sqrt{6}}{5}\right), \gamma_1^*, \alpha_2^*, \pi-\frac{1}{2}\arccos\left(\frac{1}{1-\sqrt{6}}\right), - \alpha_1^*)$ & $\frac{3}{50}(9-\sqrt{6})$ & saddle point \\ 
      \hline
      $(YZY, ZYZ)$: $(0, \beta_1^*, 0, \alpha_2^*, \frac{\pi}{2}, \gamma_2^*)$ & $1/2$ & second order trap \\ 
        \hline
        $(\mathbb{I}, ZYZ):$ $(0, 0, 0, 0, \frac{\pi}{2}, 0)$ & $1/2$ & second order trap \\ 
      \hline
      $(\mathbb{I}, YZY)$: $(0, 0, 0, 0, \beta_2^*, 0)$ & $1/2$ & second order trap \\ 
      \hline
      $(ZYZ, \mathbb{I})$: $(0, \frac{\pi}{2}, 0, 0, 0, 0)$ & $1/2$ & second order trap \\ 
      \hline
      $(YZY, \mathbb{I})$: $(0, \beta_1^*, 0, 0, 0, 0)$ & $1/2$ & second order trap \\ 
      \hline
      $(ZYZ, ZYZ)$: $(\alpha_1^*, \arccos\left(\frac{1-\sqrt{6}}{5}\right), \gamma_1^*, \alpha_2^*, \frac{1}{2}\arccos\left(\frac{1}{1+\sqrt{6}}\right), \pi- \alpha_1^*)$ & $\frac{3}{50}(9+\sqrt{6})$ & global maxima \\ 
      \hline
      $(ZYZ, ZYZ)$: $(\alpha_1^*, \arccos\left(\frac{1-\sqrt{6}}{5}\right), \gamma_1^*, \alpha_2^*,\pi- \frac{1}{2}\arccos\left(\frac{1}{1+\sqrt{6}}\right), - \alpha_1^*)$ & $\frac{3}{50}(9+\sqrt{6})$ & global maxima \\ 
      \hline
    \end{tabular}
    \caption{Critical points,  their types and values of the objective function. Asterisk indicates that the corresponding variable can take any value in its range.}
    \label{TableTheorem}
    \end{table}
\end{theorem}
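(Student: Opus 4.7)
The plan is to establish the theorem as a straightforward consolidation of Lemmas 1--6 together with the trivial cases, exploiting the decomposition of the kinematic manifold $\mathcal{R}\times\mathcal{R}$ into pieces each of which admits a smooth chart in one of the Euler parameterizations ($ZYZ$ or $YZY$). Concretely, I would partition
\[
\mathcal{R}\times\mathcal{R} = \bigsqcup_{A,B\in\{\{\mathbb I\},\,\mathcal{B}\setminus\{\mathbb I\},\,\mathcal{R}\setminus\mathcal{B}\}} A\times B,
\]
and for each of the nine pieces specify the chart in which the restricted landscape is smoothly parameterized: $(ZYZ,ZYZ)$ covers $(\mathcal{R}\setminus\mathcal{B})\times(\mathcal{R}\setminus\mathcal{B})$, $(YZY,ZYZ)$ covers $(\mathcal{B}\setminus\{\mathbb I\})\times(\mathcal{R}\setminus\mathcal{B})$, and so on. This is necessary because the $ZYZ$ parameterization degenerates on the subset $\mathcal{B}$ (when $\beta=0$ the pair $(\alpha,\gamma)$ only enters through $\alpha+\gamma$), so one must switch to the $YZY$ chart, in which $\mathcal{B}\setminus\{\mathbb I\}$ corresponds to the smooth surface $\alpha=\gamma=0$ (with $\beta\in(0,\pi)$), and similarly for the second factor.

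Next I would invoke the six lemmas in turn. Lemma 1 supplies all critical points in $(\mathcal{R}\setminus\mathcal{B})\times(\mathcal{R}\setminus\mathcal{B})$, including the two orbits of global maxima with value $\tfrac{3}{50}(9+\sqrt{6})$, two orbits of saddles with value $\tfrac{3}{50}(9-\sqrt{6})$, and saddles with value $\tfrac14$. Lemma 2 handles $(\mathcal{B}\setminus\{\mathbb I\})\times(\mathcal{R}\setminus\mathcal{B})$ and yields a one-parameter family of second order traps at value $\tfrac12$. Lemma 3 shows no critical points exist on $(\mathcal{R}\setminus\mathcal{B})\times(\mathcal{B}\setminus\{\mathbb I\})$. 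Lemma 4 supplies the two-parameter family of global minima in $(\mathcal{B}\setminus\{\mathbb I\})\times(\mathcal{B}\setminus\{\mathbb I\})$ at value $0$. Lemmas 5 and 6 analyze $\{\mathbb I\}\times(\mathcal{R}\setminus\mathcal{B})$ and $\{\mathbb I\}\times(\mathcal{B}\setminus\{\mathbb I\})$, delivering second order traps at value $\tfrac12$. For $\{\mathbb I\}\times\{\mathbb I\}$ the value is $0$ and the point is a global minimum. The final piece $(\mathcal{R}\setminus\{\mathbb I\})\times\{\mathbb I\}$ is obtained by the obvious symmetry $P_{1\to 2}(U,{\cal M}_{|1\rangle\langle 1|},\mathbb I)=P_{1\to 2}(\mathbb I,{\cal M}_{|1\rangle\langle 1|},U)$, which follows directly from~(\ref{Eq:Kinematic}) since the measurement acts trivially on diagonal matrices supported on $|1\rangle\langle 1|$ after a trivial pre-evolution (and conversely).

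At this stage I would need to justify that the \emph{type} of every critical point (global max, global min, saddle, or second order trap) is an intrinsic property that survives the change of chart. For points lying in the interior of a chart this is immediate since the Hessian's signature and rank are invariant under a local diffeomorphism. For points on chart overlaps (which occur when $\beta_1$ or $\beta_2$ equals $0$ in the $ZYZ$ chart and must be reread in the $YZY$ chart) I would observe that each such overlap has been covered by at least one non-degenerate parameterization, so the intrinsic classification is unambiguous. The identification of the two global maxima with the value $\tfrac{3}{50}(9+\sqrt{6})\approx 0.687$ also matches~(\ref{Eq:Max1}), providing an internal consistency check with~\cite{ShuangPRA2008}.

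The main obstacle I anticipate is not the algebra, which is already done in the lemmas, but the bookkeeping: collecting the critical points across charts into the combined form shown in the final table, writing the $ZYZ$-based orbits with free parameters $\alpha_1,\gamma_1,\alpha_2$ (reflecting the fact that $\LL$ depends on kinematic controls only via $\Omega=\alpha_1+\gamma_2$, $\beta_1$, $\beta_2$), and verifying that $\gamma_2$ is then fixed by $\Omega-\alpha_1$ in the two inequivalent orbits $\Omega=0,\pi$. With this accounting done, every critical point has been exhibited, classified, and no other critical points exist, which proves the theorem.
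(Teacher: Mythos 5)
Your proposal is correct and follows essentially the same route as the paper, which gives no separate proof of the theorem but obtains it exactly as you do: by decomposing $\mathcal{R}\times\mathcal{R}$ into the nine pieces determined by $\{\mathbb I\}$, $\mathcal{B}\setminus\{\mathbb I\}$ and $\mathcal{R}\setminus\mathcal{B}$, invoking Lemmas~1--6 on each, handling $\{\mathbb I\}\times\{\mathbb I\}$ and $(\mathcal{R}\setminus\{\mathbb I\})\times\{\mathbb I\}$ via the symmetry $P_{1\to 2}(U,{\cal M}_{|1\rangle\langle 1|},\mathbb I)=P_{1\to 2}(\mathbb I,{\cal M}_{|1\rangle\langle 1|},U)$, and noting (as in the paper's Remark) that the classification of critical points is chart-independent.
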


\begin{corollary}\label{corollary}
    One has
    \begin{align}\label{Eq:Max2}
    \max\limits_{c_1,c_2}P_{1\to 2}(c_1,{\cal M}_{|1\rangle\langle 1|}, c_2)&=\frac{3}{50} (9+\sqrt{6})\, .
    \end{align}
One can check that the expression in the right hand side (r.h.s.) of~(\ref{Eq:Max2}) coincides with the r.h.s. of~(\ref{Eq:Max1}).
\end{corollary}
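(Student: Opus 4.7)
The plan is direct: the Theorem (Table~\ref{TableTheorem}) has exhaustively catalogued every critical point of $P_{1\to 2}(\,\cdot\,,\mathcal{M}_{|1\rangle\langle 1|},\,\cdot\,)$ on $\mathcal{R}\times\mathcal{R}$ together with the corresponding objective values. The manifold $\mathcal{R}\times\mathcal{R}$ is compact, being a product of two continuous images of $\mathcal{SU}(2)$, and the objective is smooth; hence the supremum is attained and the maximizer is necessarily a critical point. Reading off Table~\ref{TableTheorem}, the complete list of critical values is $0,\ \tfrac14,\ \tfrac{3}{50}(9-\sqrt6),\ \tfrac12,\ \tfrac{3}{50}(9+\sqrt6)$. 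Since $\tfrac{3}{50}(9+\sqrt 6)\approx 0.687$ is the largest of these, identity~(\ref{Eq:Max2}) follows immediately.

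For the second assertion, namely that $\tfrac{3}{50}(9+\sqrt 6)$ coincides with the expression $0.004\,(\sqrt{393-48\sqrt 6}+138+7\sqrt 6)$ from~\cite{ShuangPRA2008}, I would multiply both sides by $250$ so that the claim becomes
\[
15\,(9+\sqrt 6) \;=\; \sqrt{393-48\sqrt 6}\;+\;138\;+\;7\sqrt 6,
\]
which rearranges to $\sqrt{393-48\sqrt 6}=8\sqrt 6-3$. Since $8\sqrt 6\approx 19.6>3$, both sides are positive and squaring is reversible; the computation $(8\sqrt 6-3)^2=384-48\sqrt 6+9=393-48\sqrt 6$ confirms the identity.

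The only point that genuinely requires care, and hence the only potential obstacle, is the first step: one must be sure that the enumeration in the Theorem is in fact complete, so that no hidden critical point with value larger than $\tfrac{3}{50}(9+\sqrt 6)$ has been overlooked. This is exactly what the sequence of Lemmas~\ref{theorem1}--\ref{lemma6} achieves by stratifying each factor as $(\mathcal{R}\setminus\mathcal{B})\cup(\mathcal{B}\setminus\{\mathbb{I}\})\cup\{\mathbb{I}\}$, covering the first stratum by the $ZYZ$ chart, the second by the $YZY$ chart, and treating $\{\mathbb{I}\}$ by hand, then analysing every one of the nine products of strata on $\mathcal{R}\times\mathcal{R}$. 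Once this bookkeeping is taken for granted, the corollary reduces to a finite comparison of tabulated numbers and a single elementary squaring, with no further optimisation needed.
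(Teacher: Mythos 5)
Your proposal is correct and follows essentially the same route as the paper: the corollary is read off from the exhaustive catalogue of critical values in the Theorem (the maximum on the compact manifold $\mathcal{R}\times\mathcal{R}$ must be a critical point, and $\tfrac{3}{50}(9+\sqrt 6)$ is the largest tabulated value), and your algebraic check $\sqrt{393-48\sqrt 6}=8\sqrt 6-3$ correctly verifies agreement with the value of~\cite{ShuangPRA2008}.
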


\section{Comparison with other cases of measurement-assisted transition probabilities} \label{Sec6}
Here for comparison we study other cases of the  transition probabilities.

\begin{theorem}\label{theorem2} One has
\begin{align}
\max\limits_{c_1,c_2}P_{1\to 3}(c_1,{\cal M}_{|1\rangle\langle 1|}, c_2)&=1\label{eq_max3}\\
\max\limits_{c_1,c_2}P_{1\to 3}(c_1,{\cal M}_{|2\rangle\langle 2|}, c_2)&=1\label{eq_max2}
\end{align}
\end{theorem}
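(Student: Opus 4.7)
My plan for proving both equalities is quite short, driven by the observation that $P_{1\to 3}$ is automatically a probability, hence bounded by $1$; it therefore suffices to exhibit a single pair $(U(1),U(2))\in\mathcal{R}\times\mathcal{R}$ attaining $P_{1\to 3}=1$ in each of the two measurement scenarios.

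The single useful fact I would invoke is that the state $|3\rangle$, in contrast to $|2\rangle$, lies in the coherent-control orbit of $|1\rangle$. Concretely, the spin-$1$ rotation by $\pi$ about the $y$-axis, $R := e^{-i\pi J_y}\in\mathcal{R}$, has first column $(0,0,1)^T$ by the closed form~(\ref{expJ_y}), so $R|1\rangle = |3\rangle$. I would note that this element is realized, for instance, by the $YZY$ parameterization at $(\alpha,\beta,\gamma)=(\pi,0,0)$, and so lies in the domain considered in Sec.~\ref{Sec3}.

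I would then set $U(1):=R$, so that $\rho(0-) = R|1\rangle\langle 1|R^\dagger = |3\rangle\langle 3|$, and observe that, because $|3\rangle\perp|1\rangle$ and $|3\rangle\perp|2\rangle$, one has $P_{|j\rangle}|3\rangle = 0$ and $(\mathbb{I}-P_{|j\rangle})|3\rangle = |3\rangle$ for both $j=1,2$, whence
\begin{equation*}
    \mathcal{M}_{|j\rangle\langle j|}(|3\rangle\langle 3|) = P_{|j\rangle}|3\rangle\langle 3|P_{|j\rangle} + (\mathbb{I}-P_{|j\rangle})|3\rangle\langle 3|(\mathbb{I}-P_{|j\rangle}) = |3\rangle\langle 3|.
\end{equation*}
Choosing finally $U(2) = \mathbb{I}\in\mathcal{R}$ gives $\rho(T_2) = |3\rangle\langle 3|$, and therefore $P_{1\to 3}(U(1),\mathcal{M}_{|j\rangle\langle j|},U(2)) = \langle 3|\rho(T_2)|3\rangle = 1$ for both $j=1$ (yielding~(\ref{eq_max3})) and $j=2$ (yielding~(\ref{eq_max2})).

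There is essentially no obstacle here: the qualitative content of the theorem is that, unlike $P_{1\to 2}$, the transition probability $P_{1\to 3}$ is not restricted by the conservation law at all, and the role of the non-selective measurement in the construction above is merely to leave the already coherently prepared target state $|3\rangle\langle 3|$ invariant rather than to enlarge reachability. In particular, no optimization over $c_1,c_2$ or delicate Hessian analysis is required, in sharp contrast to the landscape study of $P_{1\to 2}$ carried out in Sec.~\ref{Sec5}.
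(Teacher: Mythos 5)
Your proposal is correct and follows essentially the same route as the paper: the paper likewise observes that the measurement of $|1\rangle\langle 1|$ or $|2\rangle\langle 2|$ leaves $\langle 3|\rho|3\rangle$ unchanged, takes $U(2)=\mathbb{I}$, and maximizes $|\langle 3|U(1)|1\rangle|^2=\sin^4(\beta_1/2)$, which reaches $1$ at $\beta_1=\pi$ --- exactly your rotation $e^{-i\pi J_y}$. The only cosmetic difference is that you exhibit the optimal unitary directly and state the trivial upper bound $P_{1\to 3}\le 1$ explicitly, whereas the paper phrases it as a maximization over $\beta_1$.
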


\begin{proof}
Here, there is no need to compute $P_{1\to 3}$ explicitly using Euler parametrization. Notice that
\begin{equation*}
    \bra{3}\rho\left(0-\right)\ket{3}=\bra{3}U(1)|1\rangle\langle 1|U^{\dagger}(1)\ket{3}=\left|\bra{3}U(1)\ket{1}\right|^2=\sin^4\frac{\beta_1}{2}.
\end{equation*}
Since we can set $\alpha_2=\beta_2=\gamma_2=0$ (second evolution is trivial) and after measurement we have $\bra{3}\rho\left(0+\right)\ket{3}=\bra{3}\rho\left(0-\right)\ket{3}$, we obtain
\begin{equation*}
    \max\limits_{c_1,c_2}P_{1\to 3}(c_1,{\cal M}_{|1\rangle\langle 1|}, c_2)=\max \limits_{\alpha_1, \beta_1, \gamma_1 \in [0, 2\pi]} \bra{3}\rho\left(0-\right)\ket{3}=\max \limits_{\beta_1 \in [0, 2\pi]}\sin^4\frac{\beta_1}{2} = 1.
\end{equation*}
Similarly, we obtain
\begin{equation*}
    \max\limits_{c_1,c_2}P_{1\to 3}(c_1,{\cal M}_{|2\rangle\langle 2|}, c_2)=\max \limits_{\alpha_1, \beta_1, \gamma_1 \in [0, 2\pi]} \bra{3}\rho\left(0-\right)\ket{3}=\max \limits_{\beta_1 \in [0, 2\pi]}\sin^4\frac{\beta_1}{2} = 1.
\end{equation*}
\end{proof}

\begin{theorem}\label{theorem3} One has
\begin{equation*}
\max\limits_{c_1,c_2}P_{1\to 2}(c_1,{\cal M}_{|2\rangle\langle 2|}, c_2)=\frac{1}{2}.
\end{equation*}
\end{theorem}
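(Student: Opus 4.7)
My plan is to substitute the $ZYZ$ Euler parameterization~\eqref{Ukinematic} for both $U(1)$ and $U(2)$ into the kinematic expression~\eqref{TP2} with $k=2$ and simplify. The matrix entries I need are $|U_{11}(1)|^2=\cos^4(\beta_1/2)$, $|U_{21}(1)|^2=\tfrac{1}{2}\sin^2\beta_1$, $|U_{31}(1)|^2=\sin^4(\beta_1/2)$, $U_{11}(1)\overline{U_{31}(1)}=\tfrac{1}{4}e^{-2i\alpha_1}\sin^2\beta_1$, $|U_{22}(2)|^2=\cos^2\beta_2$, $|U_{21}(2)|^2=|U_{23}(2)|^2=\tfrac{1}{2}\sin^2\beta_2$, and the product carried by the surviving $(1,3)$-coherence of $\rho(0+)$, namely $U_{21}(2)\overline{U_{23}(2)}=-\tfrac{1}{2}e^{-2i\gamma_2}\sin^2\beta_2$. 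Using $\cos^4(\beta_1/2)+\sin^4(\beta_1/2)=1-\tfrac{1}{2}\sin^2\beta_1$ together with $1+\cos(2\Omega)=2\cos^2\Omega$ for $\Omega:=\alpha_1+\gamma_2$, I expect the dependences on $\gamma_1,\alpha_2$ and on $\alpha_1,\gamma_2$ individually to drop out, collapsing the landscape to
\[
P_{1\to 2}(c_1,\mathcal{M}_{|2\rangle\langle 2|},c_2)
= \tfrac{1}{2}\bigl[\sin^2\beta_2+\sin^2\beta_1\bigl(\cos^2\beta_2-\sin^2\beta_2\cos^2\Omega\bigr)\bigr].
\]

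The coefficient $-\tfrac{1}{2}\sin^2\beta_1\sin^2\beta_2$ of $\cos^2\Omega$ is non-positive, so maximizing over $\Omega$ (achieved at $\cos^2\Omega=0$) reduces the problem to $\tfrac{1}{2}(s+t-st)$, where $s:=\sin^2\beta_1,\,t:=\sin^2\beta_2\in[0,1]$. The identity $s+t-st=1-(1-s)(1-t)$ combined with $(1-s)(1-t)\ge 0$ immediately yields $P\le 1/2$, with equality iff $s=1$ or $t=1$. The bound is saturated at $\beta_1=\beta_2=\pi/2$, $\Omega=\pi/2$, so $\max_{c_1,c_2}P_{1\to 2}(c_1,\mathcal{M}_{|2\rangle\langle 2|},c_2)=1/2$ exactly. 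Since the $ZYZ$ chart already covers all of $\mathcal{R}$ (being merely non-injective on the degenerate subset $\mathcal{B}$ at $\beta=0$), the bound is global on $\mathcal{R}\times\mathcal{R}$ and no separate analysis of the alternative $YZY$ parameterizations is required.

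The main obstacle is the algebraic bookkeeping in the first step: one must verify that the angular dependences conspire so that only the single combination $\Omega=\alpha_1+\gamma_2$ survives alongside $\beta_1,\beta_2$, and this requires carefully combining the off-diagonal cross term with the diagonal contributions. Once that reduction is in place, the remaining optimization is a one-line inequality. Conceptually, the result reflects the fact that non-selective measurement of $|2\rangle\langle 2|$ does not create coherences between the ground/upper-excited subspace $\mathrm{span}\{|1\rangle,|3\rangle\}$ and the intermediate state $|2\rangle$, and therefore cannot break the dynamical symmetry responsible for the coherent-control upper bound $1/2$ established in~\cite{Turinici_Rabitz_2001}.
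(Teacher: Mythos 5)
Your proposal is correct and follows essentially the same route as the paper: substitute the $ZYZ$ Euler parameterization into the kinematic formula~(\ref{TP2}), reduce the landscape to a function of $\beta_1,\beta_2$ (and the phase $\Omega=\alpha_1+\gamma_2$), and bound it by $\tfrac12$ via $s+t-st=1-(1-s)(1-t)\le 1$. Your version is in fact slightly more careful than the paper's: you keep the correct coherence factor $U_{21}(2)\overline{U_{23}(2)}$ and explicitly maximize over $\Omega$, whereas the paper's displayed $\mathfrak{M}(\beta_1,\beta_2)$ silently evaluates the cross term at the optimal phase and carries a sign typo ($+\sin^2\beta_1\sin^2\beta_2$ should be $-\sin^2\beta_1\sin^2\beta_2$, as otherwise the stated bound $\mathfrak{M}\le 1/2$ would fail at $\beta_1=\beta_2=\pi/2$).
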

\begin{proof}
From~(\ref{TP2}) we get
\begin{align*}  \bra{2}\rho(T_2)\ket{2}=&\left|U_{11}(1)\right|^2\left|U_{21}(2)\right|^2+\left|U_{21}(1)\right|^2\left|U_{22}(2)\right|^2+\left|U_{31}(1)\right|^2\left|U_{23}(2)\right|^2\\&+2\Re[U_{11}(1)\overline{U_{31}(1)}U_{22}(2)\overline{U_{23}(2)}]\, .
\end{align*}
Computing the transition probability gives
\begin{align*}
    P_{1\to 2}(c_1,{\cal M}_{|2\rangle\langle 2|}, c_2)&=\bra{2}\rho(T_2)\ket{2}\\&=\frac{1}{2}\cos^2{\beta_2}\sin^2{\beta_1}+\frac{1}{8}(3+\cos2\beta_1)\sin^2{\beta_2}+\frac{1}{4}\sin^2{\beta_1}\sin^2{\beta_2}\\&=\frac{1}{2}(\sin^2{\beta_1}+\sin^2{\beta_2}+\sin^2{\beta_1}\sin^2{\beta_2})\equiv \GG(\beta_1, \beta_2)\,.
\end{align*}
In this case, the transition probability is a function of only two  kinematic controls $\beta_1$ and $\beta_2$. Notice that $\GG(\beta_1, \beta_2)\leq 1/2$ and $\GG(0, \pi/2)=1/2$. Thus
\begin{align*}
    \max\limits_{c_1,c_2} P_{1\to 2}(c_1,{\cal M}_{|2\rangle\langle 2|}, c_2)=\frac{1}{2}\,.
\end{align*}
\end{proof}

\section{Discussion}\label{SecConclusions} 
In this work, we consider a three-level quantum system with dynamical symmetry which is controlled by coherent control assisted by back-action of non-selective quantum measurement of population of some state. In the absence of the measurement, only $1/2$ of the population can be transferred from the ground to the intermediate state. Maximum measurement assisted transition from the ground to the intermediate state was computed before. In the kinematic representation of the dynamics, controls are Euler angles parameterizing spin-1 representation of the Lie algebra $\mathfrak{su}(2)$. We study in details the kinematic control landscape of the transition probability from the ground to the intermediate state, describing all its critical points. We find what all critical points of the kinematic control landscape are global maxima, global minima, second order traps and saddles. For comparison, we study the transition probability between the ground and highest excited state, as well as the case when both these transition probabilities are assisted by incoherent control implemented by measurement of the intermediate state. The next important question is what types of the critical points are presented in the dynamic control landscape. Let $F_{\rm dyn}: L^2\to \mathcal {R}$ be the mapping which maps any control $f\in L^2$ into the corresponding evolution operator $U_T^f\in\mathcal R$ with some fixed large enough $T$. Let $P^{\rm kin}_{1\to 2}:\mathcal R\times\mathcal R\to [0,1]$ be the kinematic transition probability (defined by Eq.~(\ref{Eq:Kinematic})). The dynamic landscape represents the transition probability as a function of two coherent controls $f_1$ and $f_2$ (defined by Eq.~(\ref{Eq:Dynamic})) and can be considered as composition of the mappings $F_{\rm dyn}$ and $P^{\rm kin}_{1\to 2}$ as $P_{1\to 2}(f_1,f_2)= P_{\rm kin}\circ (F_{\rm dyn}\times F_{\rm dyn})$. In so called regular points, where the Jacobian of the map $F_{\rm dyn}\times F_{\rm dyn}$ from the control space to $\mathcal R\times\mathcal R$ has full rank, the structure of the dynamical control landscape is the same as the structure of the kinematic control landscape (see Theorem~1 in~\cite{Wu_Pechen_Rabitz_Hsieh_Tsou_2008}). Thus our finding implies that at regular points 
the dynamic landscape has the same points as the kinematic landscape. However, at singular points, where the Jacobian is rank deficient, the situation can be different. Thus our results do not exclude the existence of other critical points in the dynamic landscape. Finding such points and establishing the control landscape structure around them is an important problem which requires a special analysis beyond this work. 

\section*{Acknowledgments}
We thank B.O. Volkov, V.N. Petruhanov and S.A. Kuznetsov for useful comments.
This work is supported by the Russian Science Foundation under grant \textnumero~22-11-00330,
\url{https://rscf.ru/en/project/22-11-00330/} and performed in Steklov Mathematical Institute of Russian Academy of Sciences.

\end{document}